\documentclass[reprint,amsmath,amssymb,aps,superscriptaddress, nofootinbib]{revtex4-2}
\usepackage[colorlinks=true,allcolors=blue]{hyperref}

\usepackage{graphicx,dcolumn,booktabs,amsthm,xcolor,bm}
\usepackage[capitalize]{cleveref}

\newtheorem{thm}{Theorem}[section]
\newtheorem{proposition}[thm]{Proposition}
\newtheorem{lem}[thm]{Lemma}

\begin{document}

\preprint{APS/123-QED}

\title{Finite relaxation protocols with minimal dissipation}

\author{Ben Ansbacher}
\email{bdansbacher@gmail.com}
\affiliation{Carleton College, Northfield, MN 55057, USA}
\affiliation{Santa Fe Institute, Santa Fe, NM 87501, USA}

\author{Harrison Hartle}
\affiliation{Santa Fe Institute, Santa Fe, NM 87501, USA}

\author{Abhishek Yadav}
\affiliation{Santa Fe Institute, Santa Fe, NM 87501, USA}
\affiliation{University of New Mexico, Albuquerque, NM 87106, USA}
\affiliation{Complexity Science Hub, Metternichgasse 8, 1030, Vienna, Austria}

\author{Jan Korbel}
\affiliation{Complexity Science Hub, Metternichgasse 8, 1030, Vienna, Austria}

\author{David H. Wolpert}
\affiliation{Santa Fe Institute, Santa Fe, NM 87501, USA}
\affiliation{Complexity Science Hub, Metternichgasse 8, 1030, Vienna, Austria}

\date{\today}

\begin{abstract}
Work extraction from nonequilibrium
systems is a major challenge across biological, chemical, physical, and engineering systems. Idealized protocols generally require a quasistatic relaxation stage in which the Hamiltonian is gradually adjusted through a continuum of intermediaries. Here, we consider protocols restricted to a finite number $N$ of intermediary Hamiltonians, consisting of a sequence of quench-relax steps. We determine the sequence of quenches that minimizes the dissipated work, which can be expressed in terms of a recurrence involving the Lambert function. The optimal sequence converges to the Fisher-Rao geodesic, saturating known leading-order dissipation bounds at large $N$. We obtain lower bounds on work extraction from a nonequilibrium distribution as a function of its Fisher-Rao distance to equilibrium. We extend and apply the framework in two simple models: (i) an optical trap experiment, showing that the optimal intermediary distribution can be bimodal even for unimodal initial and final distributions, and (ii) an enzyme-catalyzed reaction, showing that that accounting for relaxation time in addition to dissipation can favor barrier-lowering.
\end{abstract}

\maketitle

\section{Introduction}
\label{sec:intro}

Extracting work from nonequilibrium systems is a fundamental problem in statistical physics, with applications ranging from molecular and biological systems to engineered devices~\cite{blaber2023optimal,brown2019theory,leighton2025flow}. For a system with Hamiltonian $H_0$ initially prepared in a nonequilibrium distribution $\rho_0$, the maximum extractable work is the excess nonequilibrium free energy of $\rho_0$~\cite{hasegawa2010generalization}. Given unrestricted Hamiltonian control, this bound can be saturated by an ideal protocol: the Hamiltonian is instantaneously quenched to an $H'$ for which $\rho_0$ is the equilibrium distribution, and then returned quasistatically to $H_0$~\cite{parrondo2015thermodynamics}. The quasistatic return is reversible but requires access to a continuum of intermediary Hamiltonians, which is a degree of control often unavailable in realistic work extraction protocols. 

Similar limitations have motivated the study of work extraction subject to protocol constraints~\cite{kolchinsky2021work}. Such constraints often pertain to the set of available quenching Hamiltonians, for instance, the ability to manipulate only certain energies \cite{Kolchinsky_2021}, having access to fewer quenches than possible nonequilibrium scenarios~\cite{Hartle2024}, or having only certain parameterized forms of quench as determined by experiment~\cite{Toyabe2010, Koski2014, Vidrighin2016}. But in many of the aforementiond cases, a quasistatic return from $H'$ to $H_0$ is assumed possible in the protocol, implicitly invoking a continuum of Hamiltonians. If we only have access to a finite collection of Hamiltonians, what is the best alternative to a quasistatic relaxation that can be carried out? In this work, we study work extraction when the return protocol consists of $N$ Hamiltonians, with complete equilibration following each quench. We refer to such protocols as \emph{finite relaxation protocols}. Similar sequential protocols or `step-equilibrations' have been used to approximate quasistatic processes~\cite{Nulton1985}; our focus is on determining and characterizing the optimal intermediaries at finite $N$.

Prior work has established lower bounds on dissipated work at large $N$ in terms of the Fisher-Rao (FR) distance between the initial and final distributions~\cite{Diosi2000}. Moreover, it has been shown at finite $N$ that inserting an additional intermediary can always reduce dissipation~\cite{Salamon2023}. The total dissipated work of a finite relaxation protocol is the sum of Kullback-Leibler (KL) divergences between successive intermediary equilibrium distributions~\cite{e27010042}. We obtain a coupled set of implicit equations that determine the optimal intermediaries in terms of the Lambert $W$ function, and which are amenable to efficient computation. The problem of minimization of a sequential sum of KL divergences was also addressed in Ref.~\cite{Pavlichin_2016}, which our results align with. We compute these solutions and examine their thermodynamic properties; their associated minimal dissipated work, $\mathcal{W}_N^{\mathrm{diss}}$, achieves the known $O(N^{-1})$ leading-order bound~\cite{Salamon1983}. We examine the mean dissipated work and its probability density for randomly drawn initial and final distributions, and also obtain an asymptotic lower bound on the maximal {\it extracted} work, denoted $\mathcal{W}^{\mathrm{ext}}_N$, in terms of FR distance. Our results complement a variety of related works on optimal protocols for work extraction and dissipation-minimization in classical and quantum thermodynamics~\cite{Abiuso_2020,mckeever2026finite}.

The rest of this paper is organized as follows. Sec.~\ref{sec:framework} describes the framework of finite relaxation protocols, and Sec.~\ref{sec:results} presents our results on the optimal sequence of intermediary Hamiltonians. Sec.~\ref{sec:extensions} covers two framework extensions for applications: an optical trapping potential, and a minimal model of enzymatic catalysis. We conclude in Sec.~\ref{sec:discussion}.

\section{Finite relaxation protocol}
\label{sec:framework}

Consider a system with microstates $x \in X$ and Hamiltonian $H(x)$, coupled to a single thermal reservoir with inverse temperature $\beta$. Its equilibrium (Boltzmann-Gibbs) distribution is
\begin{equation}
    \pi^H(x) := \frac{e^{-\beta H(x)}}{Z_H},
\end{equation}
Any other distribution $\rho \ne \pi^H(x)$ is a nonequilibrium distribution with respect to $H$ and $\beta$. The free energy $F_H(\rho): = \langle H\rangle_\rho - \beta^{-1} S(\rho)$ is minimized by $\pi^H$, with $\langle H\rangle_\rho=\sum_{x\in X}\rho(x)H(x)$ denoting the average energy of outcomes from $\rho$ and $S(\rho) :=-\sum_{x\in X}\rho(x)\log\rho(x)$ denoting the Shannon entropy. The excess free energy of a nonequilibrium distribution $\rho$ is 
\begin{equation}
\label{eq:DeltaF}
   \Delta F_H(\rho):= F_H(\rho) - F_H(\pi^H) = \beta^{-1} D(\rho \| \pi^H),
\end{equation}
with $D(\rho||\rho'):=\sum_{x\in X}\rho(x)\log\frac{\rho(x)}{\rho'(x)}$ the Kullback-Leibler divergence. Equilibration of $\rho$ under a fixed Hamiltonian $H$ thus dissipates work $\beta^{-1} D(\rho \| \pi^H)$~\cite{kawai2007dissipation,takara2010generalization}. In an ideal work-extraction protocol, one first carries out an instantaneous quench (Hamiltonian adjustment) $H_0\rightarrow H'$ to
\begin{equation}
\label{eq:Hprime}
    H'(x) = -\beta^{-1} \log \rho_0(x) + \mathrm{const},
\end{equation}
so that $\rho_0 = \pi^{H'}$. Then, quasistatically, one readjusts $H'$ back to $H_0$~\cite{parrondo2015thermodynamics}, passing through a continuum of intermediaries, yielding the maximal work extraction~\eqref{eq:DeltaF}. We instead restrict the return to $N$ intermediary Hamiltonians $H_1, \ldots, H_N$, finally reaching $H_{N+1}:=H_0$, with complete equilibration following each quench (see Fig.~\ref{fig:protocol}). Writing $\rho_i := \pi^{H_i}$ for $i=1,...,N$ and $\rho_{N+1}=\pi^{H_0}$, the total dissipated work is then
\begin{equation}
\label{eq_optimization_prob}
\mathcal{W}^{\mathrm{diss}} = \beta^{-1}\sum_{i = 1}^{N+1} D(\rho_{i-1} \| \rho_{i}).
\end{equation}
This protocol assumes an initial quench $H_0\rightarrow H'$ of the form~\eqref{eq:Hprime}, which is the first step of the optimal protocol with quasistatic relaxation~\cite{parrondo2015thermodynamics}; such an $H'$ exists for every full-support distribution $\rho_0$. However, a natural extension is to include the first quench in the finite set, seeking the optimal $N+1$ Hamiltonians. We show in Appendix~\ref{ap:initialquench} that in this scenario, the optimal first quench is \emph{not} of the form~\eqref{eq:Hprime}, but that the optimal collection follows a mild adjustment of the results we present below. Since every full-support distribution determines a Hamiltonian up to an additive constant via a relation of the form~\eqref{eq:Hprime}, the problem of obtaining the $\mathcal{W}^{\mathrm{diss}}$-minimizing set $\{H_i\}_{i=1}^N$ is equivalent to that of obtaining the set $\{\rho_i\}_{i=1}^N$ of intermediary distributions minimizing~\eqref{eq_optimization_prob}.

\section{Main results}
\label{sec:results}

We now characterize the optimal intermediary distributions that minimize $\mathcal{W}^{\mathrm{diss}}$ of Eq.~\eqref{eq_optimization_prob}, or equivalently the sum $\sum_{i = 1}^{N+1} D(\rho_{i-1}\| \rho_i)$, for fixed initial and final distributions $\rho_0$ and $\rho_{N+1} := \pi^{H_0}$. We assume $\pi^{H_0}$ has full support on a finite set of outcomes $X$. The sum of KL divergences is strictly convex in $(\rho_1,\ldots\rho_N)$ and diverges at the boundaries (when any of the distributions have at least one zero of probability), hence there is a unique minimizer; this follows from joint strict convexity of $D(p||q)$ on the interior of the unit simplex, and its divergence when $p$ is at a boundary. The problem of minimizing a sum of intermediary KL divergences has also been considered in Ref.~\cite{Pavlichin_2016}; as we show in Appendix~\ref{ap:lambertproof}, the set of optimal intermediary distributions satisfy
    \begin{equation}
    \label{eq:lambertsol}
        \rho_i(x) = \frac{\rho_{i-1}(x)}{W(c_i \rho_{i-1}(x) / \rho_{i+1}(x))}, \ \ \ \ i=1,...,N,
    \end{equation}
    where $W$ denotes the principal branch of the Lambert $W$ function and the constants $c_i > 0$ are determined by the normalization conditions $\sum_{x \in X} \rho_i(x) = 1$ for $i=1,...,N$. Moreover, $c_i$ is related to $\rho_{i}$ and $\rho_{i+1}$ by $c_i=\exp(-D(\rho_i||\rho_{i+1})+1)$~\cite{nielsen2013symmetrical}. Note that if $\rho_{0}(x')=0$ for any $x'\in X$, Eq.~\eqref{eq:lambertsol} at $i=1$ reduces to $\rho_1(x')=\rho_{2}(x')/c_1$ by continuity. The equations~\eqref{eq:lambertsol}, also obtained in Ref.~\cite{Pavlichin_2016}, constitute a coupled nonlinear system, lacking a closed-form expression for $\{\rho_i\}_{i=1}^N$. Nevertheless, the solutions can straightforwardly be computed numerically (Appendix~\ref{ap:algorithm}). The associated optimal intermediary Hamiltonians are 
\begin{equation} 
\label{eq:H_i_opt}
H_i(x) = -\beta^{-1}\log\rho_i(x)+\mathrm{const}, \ \ \ \ i=1,...,N.
\end{equation}
Fig.~\ref{fig:protocol} displays the optimal intermediaries~\eqref{eq:lambertsol} on the unit 2-simplex ($|X|=3$) for $N=1$ and $N=7$. The sequence converges towards the Fisher-Rao (FR) geodesic~\cite{Ay_2019} as $N$ grows~\cite{Pavlichin_2016}. The FR metric is central to the field of information geometry~\cite{Bettmann2025}, and relates to the thermodynamic length~\cite{Salamon1983,crooks2007measuring}, appearing in known dissipation bounds for finite-time thermodynamic transformations~\cite{crooks2007measuring,Ito2018,Ito_2020} as well as $N$-step protocols~\cite{Nulton1985}; see~\eqref{eq:FR_asymptotic} below. Thus, the optimal finite-$N$ sequence~\eqref{eq:lambertsol} is a discrete counterpart of the continuous FR geodesic, despite that intermediaries need not lie on exactly on the FR geodesic. See Appendix~\ref{ap:fr_background} for relevant background on the FR metric. Other thermodynamic bounds involving information geometry have been obtained in terms of the Wasserstein distance~\cite{panaretos2019statistical}, in the form of speed limit theorems~\cite{nakazato2021geometrical,dechant2022minimum,van2023topological}.

The the minimum dissipation achievable with $N$ optimal intermediaries is denoted 
\begin{equation}
\label{eq:WdissN}
    \beta\mathcal{W}^{\mathrm{diss}}_N := \min_{\rho_1,\ldots,\rho_N} \sum_{i=1}^{N+1} D(\rho_{i-1}\Vert\rho_i),
\end{equation}
as saturated by the optimal sequence~\eqref{eq:lambertsol}. The leading order behavior of $\mathcal{W}_N^{\mathrm{diss}}$ at large $N$ decays as $N^{-1}$ and achieves the established lower bound~\cite{Nulton1985,Diosi2000} of
\begin{equation}
\label{eq:FR_asymptotic}
    \mathcal{W}^{\mathrm{diss}}_N=\frac{d_{\mathrm{FR}}(\rho_0,\pi^{H_0})^2}{2\beta N}+O(N^{-2}),
\end{equation}
 where $d_{\mathrm{FR}}(\rho_0,\pi^{H_0})$ denotes the FR distance~\cite{Ay_2019} between $\rho_0$ and $\pi^{H_0}$; see Appendix~\ref{ap:fisherrao}. The work {\it extracted} under the protocol with $N$ optimal intermediaries is
\begin{equation}
\label{eq:Wext_def}
    \mathcal{W}^{\mathrm{ext}}_N := \Delta F_{H_0}(\rho_0) - \mathcal{W}^{\mathrm{diss}}_N.
\end{equation}
It can be shown that $\mathcal{W}^\mathrm{ext}_N$ is monotonically increasing with $N$~\cite{Pavlichin_2016}, and from~\eqref{eq:FR_asymptotic} it follows that $\mathcal{W}^\mathrm{ext}_N\rightarrow\Delta F_{H_0}(\rho_0)$ as $N\rightarrow\infty$. In Appendix~\ref{ap:Wext_lower} we obtain a lower bound on $\mathcal{W}^{\mathrm{ext}}_N$ entirely in terms of the FR distance; we derive, up to an $O(N^{-2})$ remainder,
\begin{equation}
\label{eq:Wext_lower_FR}
    \mathcal W^\mathrm{ext}_N \gtrsim \frac{d_{\mathrm{FR}}(\rho_0,\pi^{H_0})^2}{\beta}\left(\frac{1}{4}-\frac{1}{2N}\right).
\end{equation}

The case $N=1$ is tractable and provides the simplest nontrivial intermediary satisfying Eq.~\eqref{eq:lambertsol}. Namely, the optimal intermediary $\rho:=\rho_1$ is 
\begin{equation}
\label{eq:N1_solution}
    \rho(x) = \frac{\rho_0(x)}{W(c\rho_0(x) / \pi^{H_0}(x))},
\end{equation}
where $c >0$ is determined by normalization. Distributions of the form~\eqref{eq:N1_solution} have been referred to as Jeffreys centroids, in the context of the initial and final distributions being the arithmetic and geometric means of histograms~\cite{nielsen2013symmetrical}. We analyze the solution~\eqref{eq:N1_solution} in Appendix~\ref{ap:propsoflambert}; the associated work extraction obeys
\begin{equation}
\label{eq:Wext1}
\beta\mathcal{W}^{\mathrm{ext}}_1 = \left\langle W\left(\frac{c \rho_0(x)}{\pi^{H_0}(x)}\right) \right\rangle_{\rho_0} - 1\ge 0,
\end{equation}
with equality only if $\rho_0=\pi^{H_0}$. Dissipation can thus be reduced even by introducing just one intermediary, reflecting the `ladder property' of Ref.~\cite{Salamon2023}.

\begin{figure}
    \centering
    \includegraphics[width=1\linewidth]{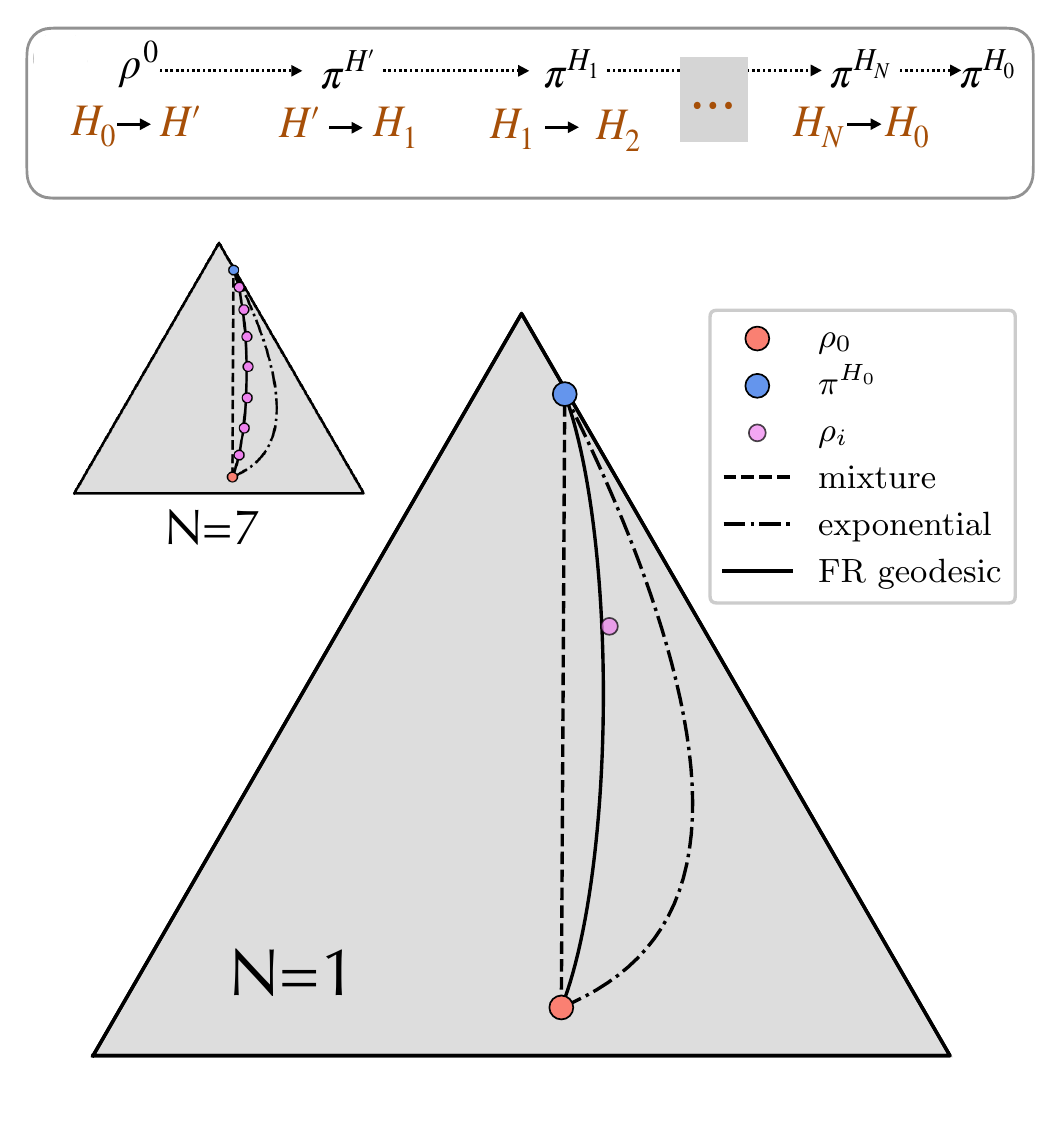}
    \caption{Top: Schematic of quench and relax protocol using finite many Hamiltonians. 
    Bottom: Plot on 2-simplex of the optimal intermediary $\rho_i$ (magenta) given by~\eqref{eq:lambertsol}. The initial distribution is $\rho_0$ and the final distribution is $\pi^\beta$. $\rho_i$ is relatively close to the the Fisher-Rao geodesic, but not exactly on it. Moreover, $\rho_i$ does not fall on the linear mixture interpolation $(1-t)\rho_0(x) + t\pi^\beta(x)$ or the exponential interpolation $\rho_0(x)^t(\pi^\beta(x))^{1-t}/Z(t)$~\cite{Ay_2019}. Inset: Optimal intermediaries for $N=7$. The optimal intermediary distributions $\rho_i$ for $i=1,...,7$ satisfy~\eqref{eq:lambertsol}. We can see the optimal intermediary distributions converge to the FR geodesic for large $N$.}
    \label{fig:protocol}
\end{figure}

\begin{figure}[h!]
    \centering
\includegraphics[width=1.05\linewidth,trim=50 0 0 10,clip]{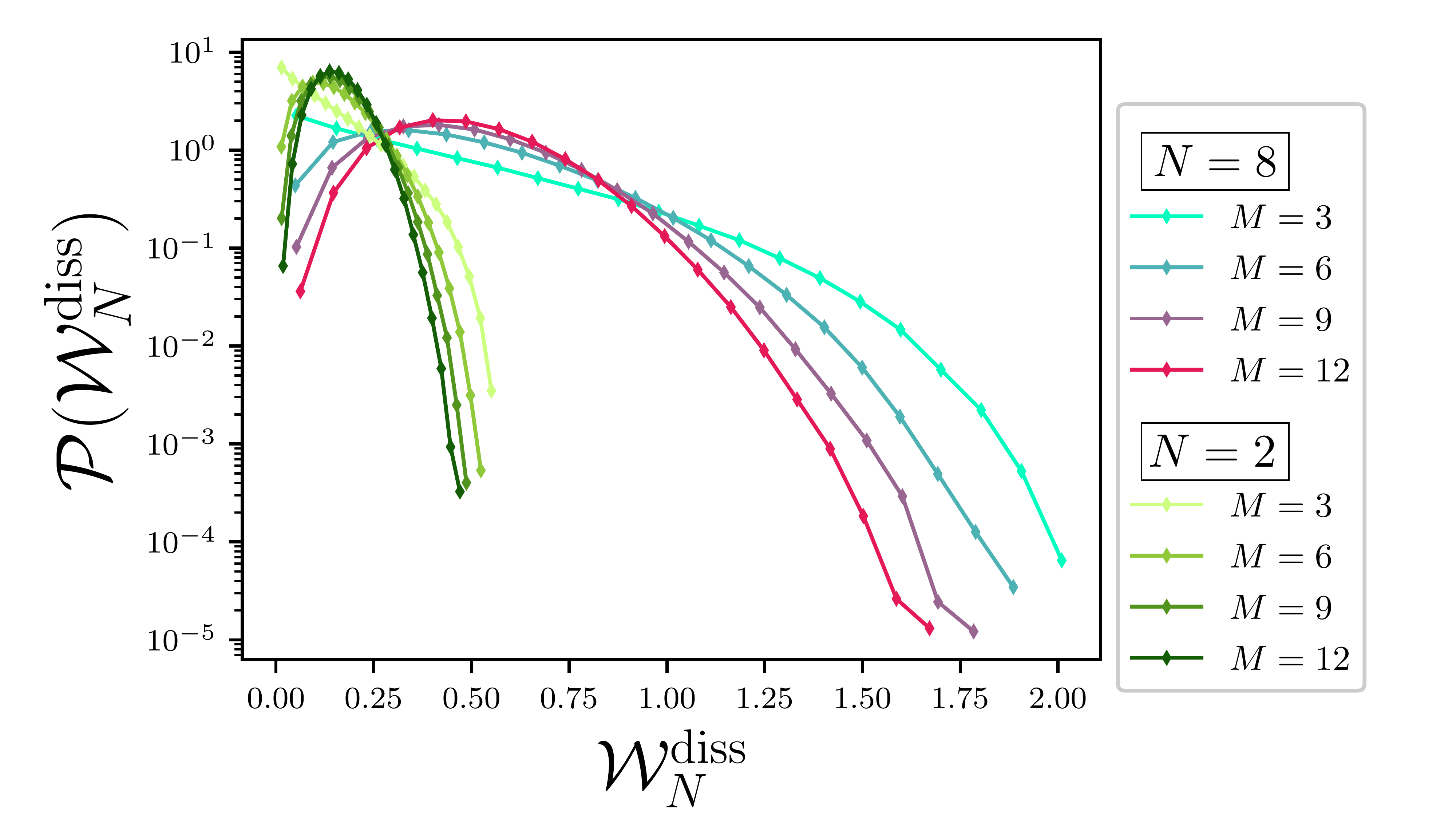}
\includegraphics[width=1.0\linewidth,trim=0 10 0 0,clip]{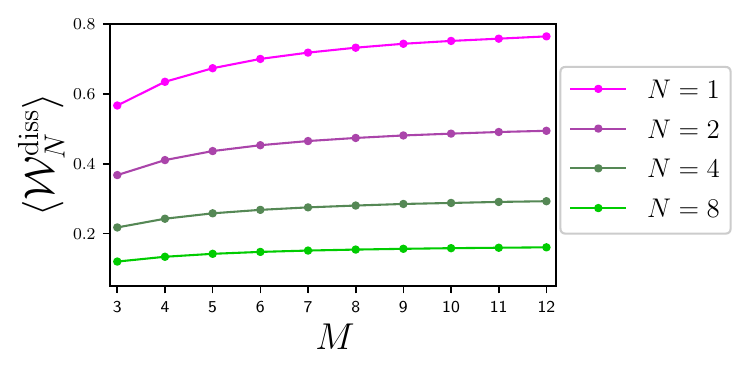}
    \caption{Minimal dissipated work $\mathcal{W}_N^{\mathrm{diss}}$ (Eq.~\eqref{eq:WdissN}) from $9\times 10^5$ pairs $(\rho^0,\pi^{H_0})$ drawn i.i.d. from $\mathrm{Dirichlet}(\frac{1}{2},...,\frac{1}{2})$ (Jeffreys prior). Upper panel: estimated probability density $\mathcal{P}(\mathcal{W}^{\mathrm{diss}}_N)$ at $N=2$ and $N=8$ across $M:=|X|$, using $20$ linearly-spaced bins with upper and lower bin edges set by the observed minimum and maximum values. Lower panel: at each $N$ as $M\rightarrow\infty$ the mean dissipated work $\langle \mathcal{W}_N^{\mathrm{diss}}\rangle$ saturates to a finite limit.}
    \label{fig:Wdiss}
\end{figure}

In Fig.~\ref{fig:Wdiss} we show the minimal dissipated work $\mathcal{W}_N^{\mathrm{diss}}$ of finite relaxation protocols with $N$ intermediaries and $M:=|X|$ outcomes, with initial and final distributions each drawn from the Jeffreys prior over categorical distributions, or equivalently the $\mathrm{Dirichlet}(\frac{1}{2},...,\frac{1}{2})$ density over the unit simplex~\cite{Jeffreys}, which is proportional to the square root of the determinant of the Fisher information matrix. Mean dissipated work appears to converge to finite limiting values as $M\rightarrow\infty$ at fixed $N$.

\section{Extensions}
\label{sec:extensions}

We now present two extensions of the framework of finite relaxation introduced above, arising in application to two simple models. The first example (Sec.~\ref{ssec:optical_trap}) is for an experimental setting---a particle in a 1D space manipulated by an optical trap subject to external tuning, requiring extension of our results in Sec.~\ref{sec:results} to a continuum sample space $X=\mathbb{R}$. The second example (Sec.~\ref{ssec:enzyme}) is an elementary model of enzymatic catalysis; we show that minimization of $\mathcal{W}^{\mathrm{diss}}$ alone fails to reproduce the barrier-lowering required for catalysis, motivating an adjusted optimization involving reaction timescale in addition to dissipated work.

\subsection{Optical trap}
\label{ssec:optical_trap}

We now consider a system with continuous-valued microstates: the possible positions of a point particle in one dimension, $x \in \mathbb{R}$. Colloidal particles in externally controlled optical traps provide a standard experimental platform for stochastic-thermodynamic protocols~\cite{Blickle_2011}. Feedback-controlled optical tweezers have been used to impose programmable virtual potentials, including double-well landscapes~\cite{Kumar_2018}. These techniques motivate the present one-dimensional model, which still unrestricted control over intermediary Hamiltonians. We show in Appendix~\ref{ap:cont} that the optimal intermediary distributions satisfy the continuum counterpart of~\eqref{eq:lambertsol}.

Given an initial distribution $\rho_0(x)=e^{-\frac{\beta}{2}x^2}/\sqrt{2\pi/\beta}$, which is nonequilibrium with respect to initial Hamiltonian $H_0(x) := \frac{1}{2}(x-x_0)^2$ with $x_0>0$, the controller performs an instantaneous quench to $H'(x) = \frac{1}{2}x^2$, under which $\pi^{H'}=\rho_0$. A series of $N$ intermediary quenches is then carried out. In Sec.~\ref{sec:results} we have assumed arbitrary manipulability of the $N$ intermediary Hamiltonians, so the intermediary distributions need not be translated Gaussians. If instead we were restricted to quenches of the form $H(x-x_i)$ for controllable $x_j$ ($j=1,...,N$), i.e., simple translations of an optical trap, the optimal intermediaries would be equally spaced Gaussians at offset $x_j=jx_0/(N+1)$. We compute the dissipation-minimizing solution~\eqref{eq:lambertsol} to obtain the optimal intermediaries, showing how they differ from simple translations.

\begin{figure}
    \centering
    \includegraphics[width=1\linewidth]{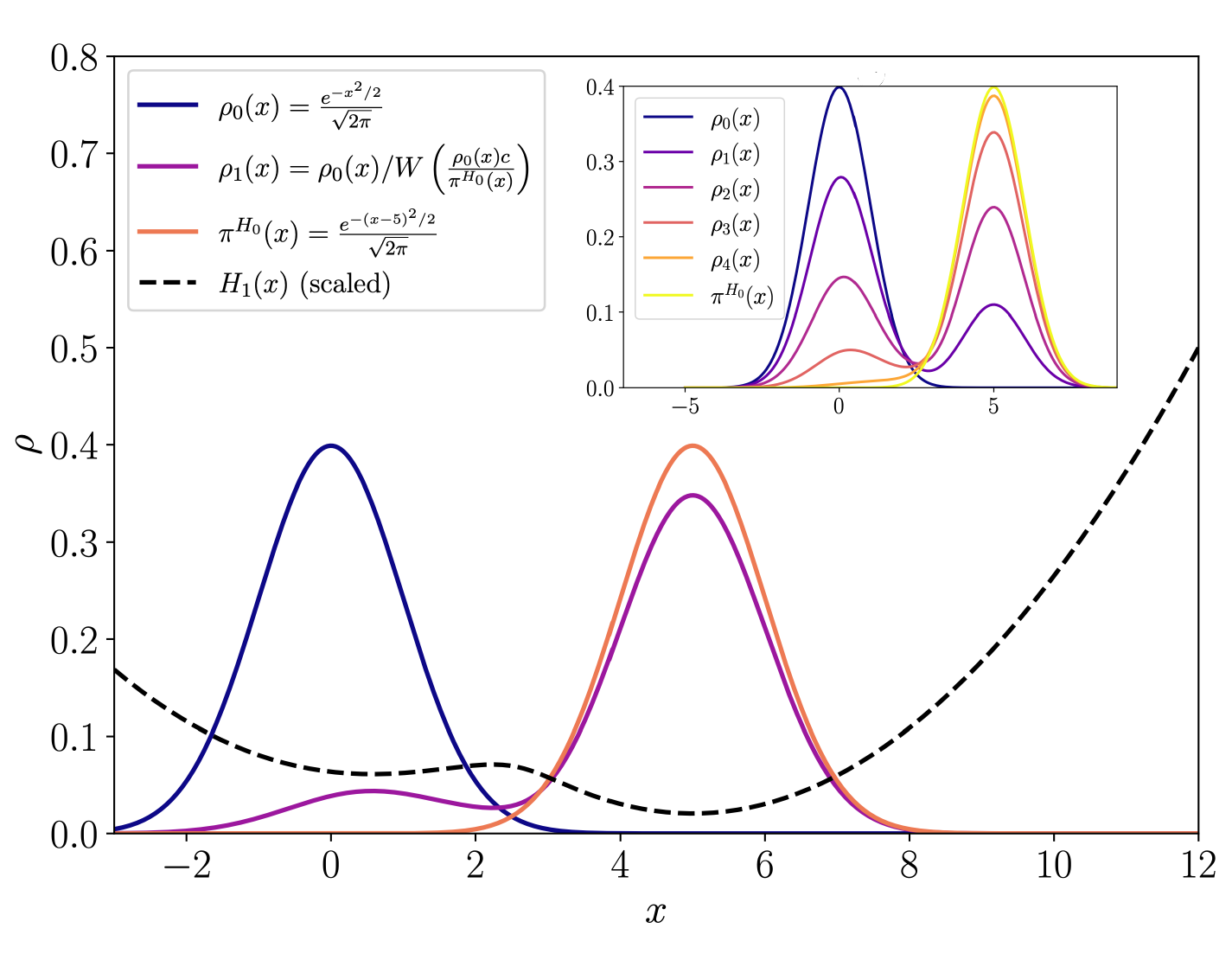}
    \caption{Optimal intermediary $\rho(x)$ for $N=1$ with quadratic Hamiltonians at $\beta=1$, for initial and final distributions $\rho_0(x) = e^{-\frac{1}{2}x^2}/\sqrt{2\pi}$ and $\pi^{H_0}(x)=e^{-\frac{1}{2}(x-5)^2}/\sqrt{2\pi}$. The optimal intermediary Hamiltonian $H_1$ is shown. Similarly the optimal intermediary distributions for $N=4$ case are shown in the inset plot.}
    \label{fig:distsandhamiltonians}
\end{figure}

In the special case $N=1$, the optimal intermediary distribution is given by~\eqref{eq:lambertsol}, which in this case becomes
\begin{equation}
    \rho_1(x) = \frac{e^{-\frac{\beta}{2}x^2}}{\sqrt{\frac{2\pi}{\beta}}W\left(ce^{\frac{\beta}{2}(x_0^2-2xx_0)}\right)},
\end{equation}
where $c$ is determined by normalization; this corresponds via~\eqref{eq:Hprime} to the optimal intermediary Hamiltonian $H_1$:
\begin{equation}
\begin{aligned}
    H_1(x)= \frac{1}{2}(x-x_0)^2-\frac{1}{\beta}W\left(ce^{\frac{\beta}{2}(x_0^2-2xx_0)}\right)+\mathrm{const},
\end{aligned}
\end{equation}
where we use the identity $\log W(z)=\log(z)-W(z)$ for $x>0$.

Fig.~\ref{fig:distsandhamiltonians} shows these optimal intermediary distribution(s) for $N=1$ and $N=4$, alongside the Hamiltonian $H_1(x)\propto -\beta^{-1}\log \rho_1(x)$ for the intermediary equilibrium $\rho_1=\pi^{H_1}$; the inset displays the probability densities ($\rho_0,...,\rho_{N+1}=\pi^{H_0}$). Note that whereas the initial and final distributions are unimodal, the optimal intermediaries can be double-well potentials, differing substantially from any simple translation. Moreover, the optimal intermediaries are not symmetric across the midpoint between $\rho_0$ and $\pi^{H_0}$, and are not exact mixtures of the two---reflecting, respectively, the asymmetry of the KL divergence and that the optimal intermediaries do not lie on the mixture interpolation~\cite{Ay_2019} (Fig.~\ref{fig:distsandhamiltonians}).

We show in Appendix~\ref{ap:klproof} that for $N=1$, as the separation parameter $x_0$ between $\rho_0$ and $\pi^{H_0}$ grows, the dissipation on the last step 
\begin{equation}
    D(\rho_1||\pi^{H_0})\rightarrow 1,
\end{equation}
whereas the first step's dissipation $D(\rho_0||\rho_1)$ grows with $x_0$. We provide formal justification of this for a large class of potentials in Appendix~\ref{ap:klproof}; we also provide numerical evidence that a similar behavior holds for large $N$, where the first leg is dominant and the others approach constants (Fig.~\ref{fig:klsplit}).

\subsection{Enzymatic reaction}
\label{ssec:enzyme}

\begin{figure}
    \centering
    \includegraphics[trim = {0 1cm 0 0}, width=0.85\linewidth]{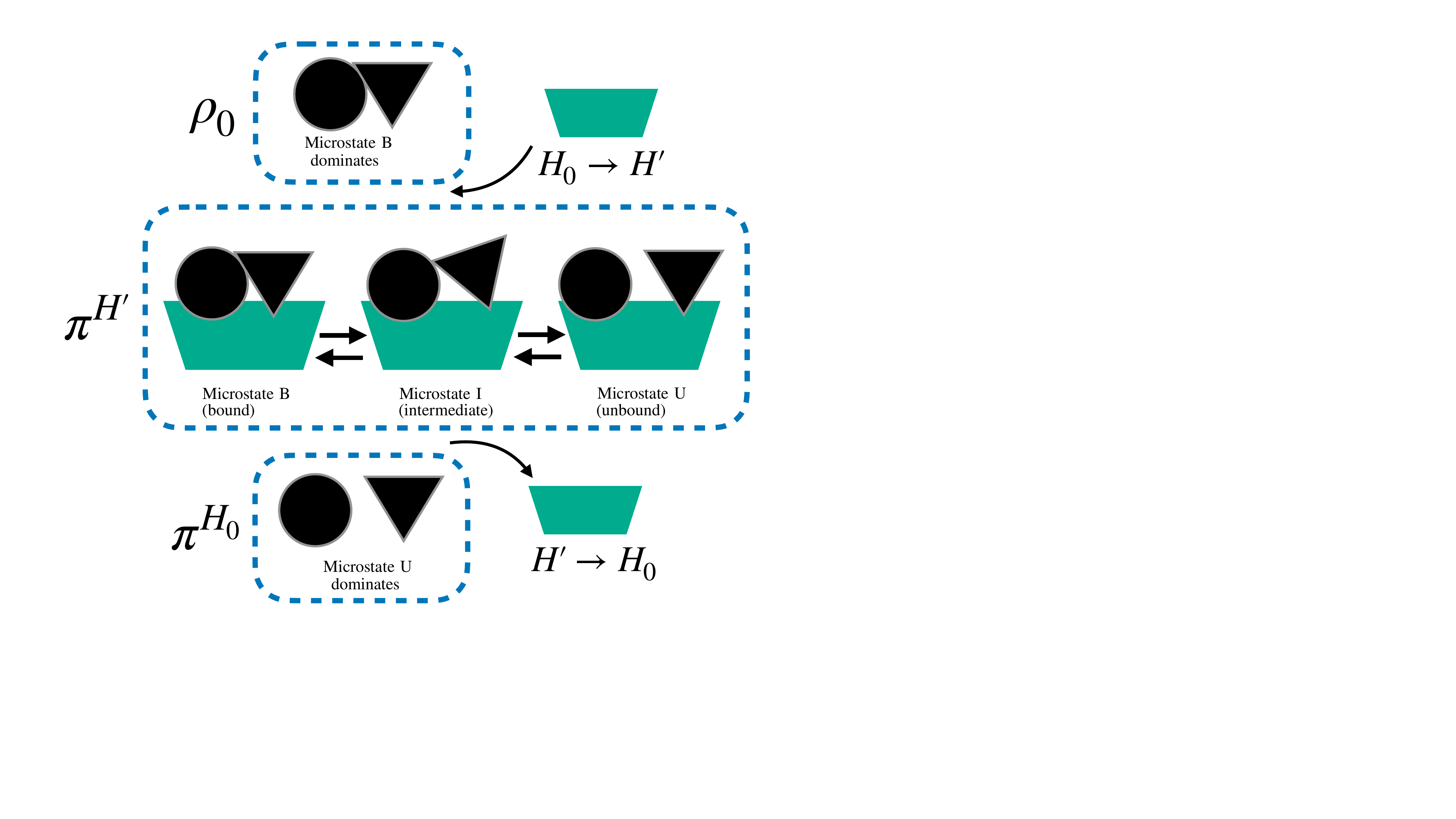}
    \caption{Schemtic of an enzyme-catalyzed reaction as modeled in Sec.~\ref{ssec:enzyme}. Before the enzyme arrives, the substrate is in a meta-stable distribution, $\rho_0$, dominated by the microstate $B$ (bound) representing the substrate before reaction. The enzyme arrives and attaches by van der Waals forces, corresponding to an effective quench $H_0\rightarrow H'$. In the presence of the enzyme, the system relaxes to the equilibrium associated with $H'$. The enzyme eventually disconnects due to thermal fluctuations, corresponding to a second effective quench, $H'\rightarrow H_0$. The system finally relaxes to the equilibrium of the original Hamiltonian, $\pi^{H_0}$, which is dominated by microstate $U$ (unbound).}
    \label{fig:enzyme_schematic}
\end{figure}

We introduce a simple model of enzymatic catalysis and analyze it within the quench-relax framework developed above; an enzyme-induced barrier-height modification serves as an effective quench. The process of interest is catalysis of a dimer's fragmentation into two monomers. The system microstates are $X=\{B,I,U\}$, with $B$ and $U$ representing a bound, high-energy configuration (of energy $\epsilon_B=1$) and an unbound, low-energy configuration (of energy $\epsilon_U=0$), and $I$ represents an intermediate, partially-separated configuration of higher energy than $B$ ($\epsilon_I>1$). See Fig.~\ref{fig:enzyme_schematic}. A similar model has been considered in Ref.~\cite{ArangoRestrepo_2018} in which the sample space was a continuous interval representing a reaction coordinate, with the three microstates of $X$ obtained by coarse-graining. Direct transitions from $B$ to $U$ are forbidden, requiring instead passage through $I$; this additional restriction is not determined by the microstate energies.

Upon the endogenous arrival of an enzyme, an interaction Hamiltonian becomes non-negligible, resulting in an effective adjustment of the microstate energies akin to an externally imposed quench; in idealized conditions, the only nonnegligible effect it has is $\epsilon_I\rightarrow\epsilon_{I}'=\epsilon_I-\delta$, with $\delta>0$ for barrier-lowering and $\delta<0$ for barrier-raising. The composite system has Hamiltonian $H(x,y)=\epsilon_x+H_{\mathrm{int}}(x,y)$, where variable $y$ indicates the enzyme's presence ($y=1$) or absence ($y=0$). We take
\begin{equation}
\label{eq:Hint}
H_{\mathrm{int}}(x,y)=-\delta_{x,I}\delta_{y,1}\delta,
\end{equation}
reflecting that the enzyme's presence only alters the energy of microstate $I$. Assuming slow dynamics of $y$, the statistics of $x$ equilibrate to the conditional $\pi_H(x|y)$, so the effective quench $H_0\rightarrow H'$ is from
\begin{equation}
H_0(x):=H(x,0)=\epsilon_x,
\end{equation}
to
\begin{equation}
\label{eq:H0_Hprime}
H'(x):=H(x,1)=\epsilon_x-\delta_{x,I}\delta.
\end{equation}
Prior to the quench, we assume a nonequilibrium initial condition $\rho_0=(\delta_{x,B})_{x\in\{B,I,U\}}$, representing high concentration of the bound configuration. When the enzyme departs, long after the system's equilibration to $\pi^{H'}$, the reverse effective quench $H'\rightarrow H_0$ takes place; finally, the system equilibrates to $\pi^{H_0}$. Note two key differences from the protocol of Sec.~\ref{sec:results}: first, we operate within a 1D parameterized family ($\delta\in\mathbb{R}$) rather than granting full Hamiltonian manipulability; second, we employ only an initial quench $H\rightarrow H'$ with no subsequent intermediary sequence $\{H_1,...,H_N\}$, i.e., we take $N=0$. Moreover, $H'$ is not the conventional $-\beta^{-1}\log\rho_0$, which indeed is not optimal for finite protocols (Appendix~\ref{ap:initialquench}).

Our results in Appendix~\ref{ap:initialquench} imply that when the initial quench $H'$ is adjustable rather than fixed by~\eqref{eq:Hprime}, the optimal $H'$ at $N=0$ is identical to the optimal $H_1$ at $N=1$ under the fixed initial quench, yielding a $\pi^{H'}$ of the form~\eqref{eq:N1_solution}. However, since we operate within a 1D family of possible $H'$ parameterized by $\delta$ (Eq.~\eqref{eq:H0_Hprime}), we instead optimize over $\delta$ the total dissipated work:
\begin{equation}
\begin{aligned}
\label{eq:Wdiss_enz}
\mathcal{W}^{\mathrm{diss}}(\delta)&=D(\rho_0||\pi^{H'})+D(\pi^{H'}||\pi^{H_0})\\
&=1+\frac{1}{\beta}\log Z_0+\frac{\delta e^{-\beta(\epsilon_I-\delta)}}{Z'(\delta)},
\end{aligned}
\end{equation}
where $Z_0=1+e^{-\beta}+e^{-\beta\epsilon_I}$, $Z'(\delta)=Z_0+e^{-\beta\epsilon_I}(e^{\beta\delta}-1)$, $\pi^{H'}_x=e^{-\beta(\epsilon_x-\delta_{x,I}\delta)}/Z'(\delta)$, and, recall, we set $\epsilon_B=1$ and $\epsilon_U=0$. To determine the dissipation-minimizing value of $\delta$, we differentiate~\eqref{eq:Wdiss_enz} to obtain stationarity condition
\begin{align}
\label{eq:piH2_2}
1+(1-\pi^{H'}_I)\beta\delta=0,
\end{align}
from which $1/\delta^*=-\beta(1-\pi^{H'}_I)<0$, with negativity indicating that the energy barrier should be {\it raised} to minimize dissipation. The solution to~\eqref{eq:piH2_2} is
\begin{equation}
\label{eq:deltastar}
  \delta^*= -\frac{1}{\beta}\left[1+W\left(\frac{e^{\beta(1-\epsilon_I)-1}}{1+e^{\beta}}\right)\right],
\end{equation}
with $W$ denoting the Lambert $W$ function. Enzymatic catalysis accelerates reactions by reducing their activation energy relative to the uncatalyzed process~\cite{robinson_2015}, which would correspond to $\delta>0$. In contrast, the dissipation-minimizing solution $\delta^*<0$ of~\eqref{eq:deltastar} suggests that dissipation minimization omits the benefit of accelerated reaction rate, motivating the model extension introduced below. To quantify the impact of $\delta$ on timescale, we model the dynamics via a continuous-time Markov chain (CTMC) and examine the relaxation time~\cite{levin2017markov} (Appendix~\ref{ap:relaxation_time}). Denoted $\mathcal{T}^{\mathrm{equi}}$, the relaxation time is defined as the reciprocal magnitude of the smallest nonzero eigenvalue of the associated rate matrix; the rates of direct transition between the bound ($B$) and unbound ($U$) microstates are set to zero. Using Arrhenius rates~\cite{Arrhenius1889}, crossing from $B$ to the high-energy intermediate $I$ is far slower than the subsequent jump from $I$ to $U$. Consequently, $\mathcal{T}^{\mathrm{equi}}\approx 1/k_{IB}$, with $k_{IB}$ the rate of transition to $I$ from $B$; up to an undetermined coefficient $M(\beta)$, the result is
\begin{equation}
\label{eq:Tequi}
    \mathcal{T}^{\mathrm{equi}}\approx M(\beta)e^{\beta (\epsilon_I-\delta-\epsilon_B)}.
\end{equation}
Eq.~\eqref{eq:Tequi} illustrates the impact of $\delta^*<0$: the equilibration timescale $\mathcal{T}^{\mathrm{equi}}$ grows exponentially with $-\delta^*$. Thus, the value that minimizes $\mathcal{W}^{\mathrm{diss}}$ does so at the consequence of large $\mathcal{T}^{\mathrm{equi}}$, predicting reaction inhibition \cite{robinson_2015}. This motivates an extension of the optimization problem to also incorporate $\mathcal{T}^{\mathrm{equi}}$.

\subsubsection{Composite cost function $C(\delta)$}
\label{sssec:cost}

Lowering of dissipation can be beneficial~\cite{Pettersson1992,Bruice2002,carter2020escapement}, but the primary functional role of enzymatic catalysis is to increase reaction pace~\cite{arcus2020}. We therefore extend the optimization problem in Sec.~\ref{ssec:enzyme} to incorporate $\mathcal{T}^{\mathrm{equi}}$ through a composite cost function $C(\delta)$ modeling a balance between energetic cost and reaction speed:
\begin{equation}
\label{eq:Cdelta_WT} C(\delta)= \mathcal{W}^{\mathrm{diss}}+\lambda \mathcal{T}^{\mathrm{equi}}.
\end{equation}
For larger $\lambda$, the optimization places greater weight on increased reaction pace; for smaller $\lambda$, it places greater weight lowered dissipation. Under the assumption that an observed enzyme has optimized~\eqref{eq:Cdelta_WT}, the value of $\lambda$ can be estimated from experimental data (Appendix~\ref{ap:estimation}).

The optimum $\delta^*$ satisfies $dC(\delta)/d\delta=0$, which, under the cost function~\eqref{eq:Cdelta_WT}, becomes
\begin{equation}
\label{optdel}
\pi_I^{H'}(1+\beta\delta[1-\pi^{H'}_I])=\lambda'e^{\beta(\epsilon_I-\delta)},
\end{equation}
where $\lambda':=\beta\lambda M(\beta)e^{-\beta}>0$. The solution $\delta^*$ of~\eqref{optdel} is a function of $(\epsilon_I, \lambda',\beta)$, but the $\beta$-dependence is obscured by the undetermined $M(\beta)$, so hereafter we take $\beta$ as constant and evaluate the dependence of $\delta^*$ on $(\epsilon_I,\lambda')$. Specifically, we show that $\delta^*>0$ for sufficiently large $\lambda'$, and thus barrier-lowering is predicted when the optimization is weighted sufficiently towards time-minimization; see Fig.~\ref{fig:deltastar}. To find the threshold $\lambda'$-value for barrier-lowering, we set $\delta=0$ in~\eqref{optdel} from which
\begin{equation}
    \lambda'^*=\pi^{H_0}_Ie^{-\beta\epsilon_I}=\frac{e^{-2\beta\epsilon_I}}{Z_0}=\frac{e^{-2\beta\epsilon_I}}{1+e^{-\beta}+e^{-\beta\epsilon_I}},
\end{equation}
ranging from $\lambda'^*(1)=e^{-2\beta}/(1+2e^{-\beta})$ at the minimal barrier height $\epsilon_I=1$ to $\lambda'^*(\infty)=0$ in the limit $\epsilon_I\rightarrow\infty$ (see Fig.~\ref{fig:deltastar}). Thus when the initial barrier height is larger, a smaller weight towards time-minimization is sufficient for the optimum to favor barrier-lowering.

\begin{figure}
    \centering
\includegraphics[width=\linewidth,trim=80 0 120 0,clip]{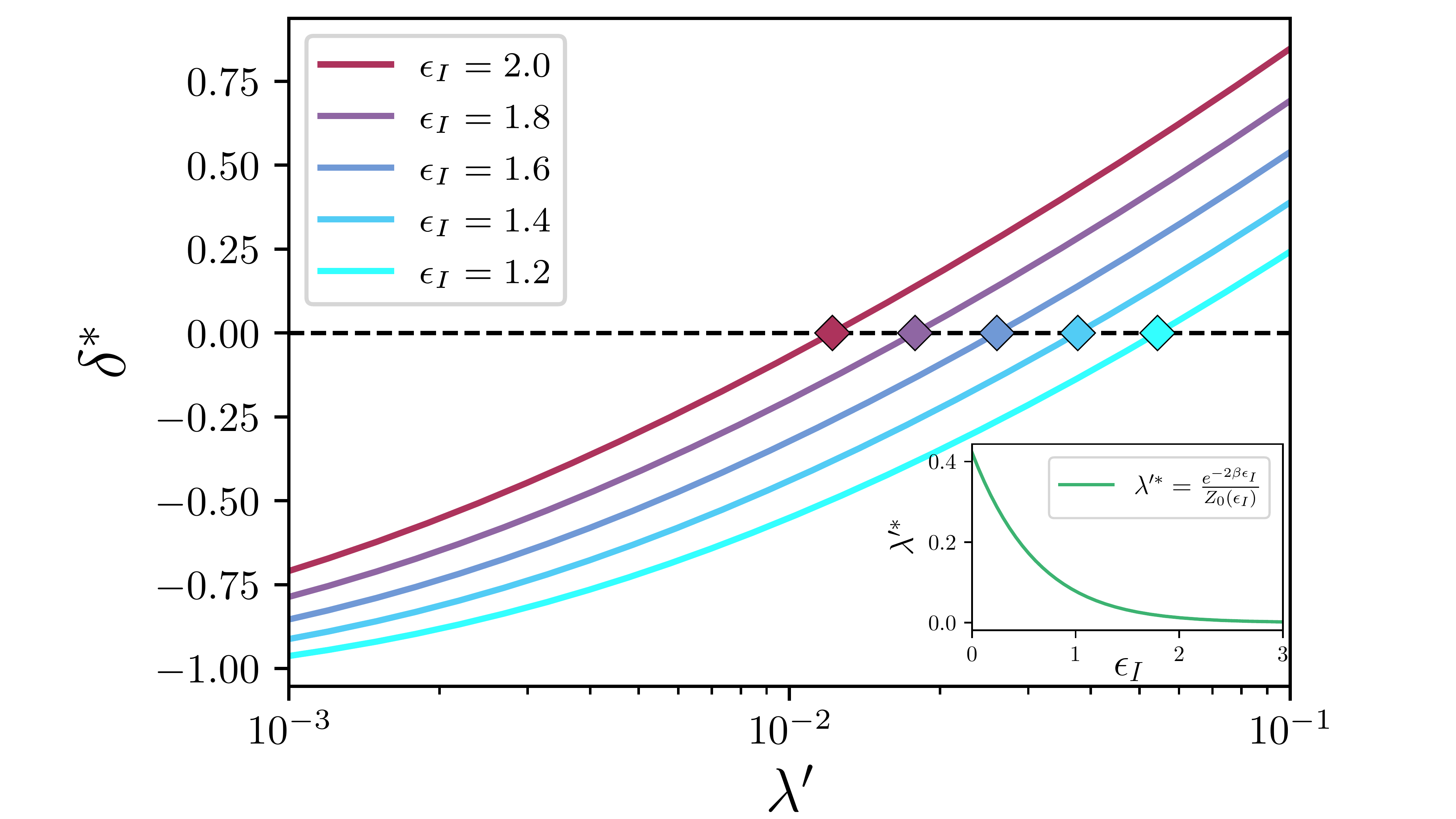}
    \caption{Optimal change in barrier height $\epsilon_I\rightarrow\epsilon_I-\delta$ as a function of the weight towards time-minimization $\lambda'$ in the composite cost function $C(\delta)$ at $\beta=1$, $\epsilon_B=1$, $\epsilon_U=0$. For $\lambda'>\lambda'^*$, the solution $\delta^*$ is positive; diamonds display $(\lambda',\delta)=(\lambda'^*,0)$ at each value of $\epsilon_I$ considered. Inset: decay of $\lambda'^*$ as a function of barrier height $\epsilon_I$; for larger initial barriers, less weight towards time-minimization is required to achieve barrier-lowering.}
    \label{fig:deltastar}
\end{figure}

\section{Discussion}
\label{sec:discussion}

We have studied work extraction from nonequilibrium systems under a quenching protocol for which the relaxation stage is restricted to a finite number $N$ of intermediary Hamiltonians. We obtained the optimal $N$ intermediaries as determined by a coupled system of nonlinear equations involving the Lambert $W$ function. For a single intermediary, $N=1$, we obtained explicit formulae and bounds for mean work extracted. For large $N$ at fixed $|X|$, the mean dissipated work saturates the $O(N^{-1})$ Fisher-Rao lower bound~\cite{Nulton1985}. In the large $N$ limit these optimal intermediaries converge to the Fisher-Rao geodesic. The mean extracted work is $O(1)$ with an $O(N^{-1})$ dissipation correction; we obtained a lower bound on work extraction in terms of Fisher-Rao distance up to error $O(N^{-2})$, applicable to any pair of distributions ($\rho_0,\pi^{H_0})$ in the simplex interior. We considered two extensions of the framework in application to simple physical models: (i) a particle in a tunable potential well, obtaining the optimal intermediary potential, proving that the dominant contribution to dissipation comes from the first step of the protocol in the regime of large separation, and (ii) a minimal model of enzymatic catalysis, showing that minimizing dissipation alone favors reaction inhibition rather than catalysis; incorporating the relaxation time into the cost function produces an optimization with interpretable tradeoff between energetic efficiency and reaction rate, yielding barrier-lowering at sufficiently large weight towards time-minimization; more realistic model extensions may be fit to data. 

This work takes a step towards a broader understanding of nonequilibrium thermodynamics subject to realistic protocol restrictions, particularly with regard to the relaxation phase of quench-relax protocols. We highlight some possible extensions and adjustments that would be worth pursuing in future works. For instance, considering a fixed, predetermined set of intermediaries $\{H_n\}_{n=1}^N$ to choose from, the optimization would become a directed path-finding problem to minimize total dissipation. Another restriction of interest would be to permit control over $N$ intermediaries but restricted to a parameterized family of Hamiltonians, as with the enzyme model in which the change in barrier height $\delta$ is the free parameter of the quench. For the optimization problem arising in the enzyme model, another realistic constraint could be to allow $\mathcal{T}^{\mathrm{equi}}$ to equal {\it at most} a certain value, rather than directly including it in the cost function, or to incorporate other quantifiers of timescale such as mean first-passage time, which may be more relevant to biological processes than relaxation time. There are many related thermodynamic optimization problems of interest involving both timescale and dissipation; for instance, the extension of the enzyme cost function to an $N$-step intermediary sequence. These, among many related possibilities, would help further bridge the gap between idealized thermodynamic models and realistic work extraction scenarios.

More broadly, many nonequilibrium work-extraction problems in physical, engineered, and biological systems operate under restricted control rather than the idealized assumption of arbitrary Hamiltonian manipulability~\cite{kolchinsky2021work}. Understanding optimal protocols under such restrictions may therefore be useful both for designing work extraction schemes and for formulating more realistic models of biological processes such as free energy transduction~\cite{brown2019theory,hill1983some,hill2012free,leighton2025flow}. The finite relaxation problem studied here provides a setting in which the thermodynamic cost of limited control can be tractably characterized. Future works involving more realistic models can explore connections to data, for instance, measurements of the barrier-lowering in real enzymes. More generally, we hope that future works will further the investigation of optimization principles subject to constrained action protocols and their implications for the function of artificial and natural systems.

\bibliography{references}

\appendix

\section{Unrestricted initial quench}
\label{ap:initialquench}

In the main text, we have considered finite relaxation protocols for which the first quench $H_0\rightarrow H'$ is always to a Hamiltonian $H'$ of the form
\begin{equation}
\label{eq:Hprime_logrho0}
    H'(x):=-\beta^{-1}\log\rho_0(x),
\end{equation}
so that the resulting equilibrium distribution is $\rho'=\rho_0$ (the initial nonequilibrium with respect to $H_0$). For quasistatic relaxation protocols the choice~\eqref{eq:Hprime_logrho0} is optimal, enabling complete harvesting of the initial excess free energy $\Delta F_{H_0}(\rho_0)$ \cite{parrondo2015thermodynamics}. We have considered protocols for which instead of an infinitely gradual return from $H'$ to $H_0$, a set of $N$ intermediaries $(H_1,...,H_N)$ is utilized. 

However, it is natural to ask whether the choice of $H'$ in~\eqref{eq:Hprime_logrho0} remains optimal in finite relaxation scenarios, or if it is only optimal in the quasistatic limit $N\rightarrow\infty$. Indeed, we now show that a different initial quench $H'$ is optimal in the extended problem of choosing $N+1$ quenches $(H',H_1,...,H_N)$ to minimize $\mathcal{W}^{\mathrm{diss}}$. Below, we characterize this extended optimization, demonstrating that its solution can be expressed via a mild adjustment to the results presented in Sec.~\ref{sec:results}.

To determine the optimal protocol when the initial quench $H_0\rightarrow H'$ is adjustable rather than set by~\eqref{eq:Hprime_logrho0}, we do an accounting of the total work extraction accumulated from each step. For two Hamiltonians $H$ and $H^*$ and a distribution $\rho$, we denote the work extracted via the quench $H\rightarrow H^*$ as
\begin{equation}
\mathcal{W}^{\mathrm{ext}}_{H\rightarrow H^*}(\rho)=-\sum_{x\in X}(H^*(x)-H(x))\rho(x).
\end{equation}
For the sequence of quenches $(H',H_1,...,H_N)$, the total work extraction of the protocol is
\begin{equation}
\label{eq:total_Wext_Hprime}
\mathcal{W}^{\mathrm{ext}}=\mathcal{W}^{\mathrm{ext}}_{H_0\rightarrow H'}(\rho_0)+\mathcal{W}^{\mathrm{ext}}_{H'\rightarrow H_1}(\rho')+\sum_{i=1}^N\mathcal{W}^{\mathrm{ext}}_{H_i\rightarrow H_{i+1}}(\rho_i),
\end{equation}
with $\rho'=\pi^{H'}$, $\rho_{i}=\pi^{H_i}$ for $i\in[N]$, and $\rho_{N+1}=\pi^{H_0}$. The work extracted in the initial quench $H_0\rightarrow H'$ satisfies
\begin{equation}
\label{eq:Wext_H0_Hprime}
\beta\mathcal{W}^{\mathrm{ext}}_{H_0\rightarrow H'}(\rho_0)= D(\rho_0||\pi^{H_0})-D(\rho_0||\rho')+\log\frac{Z'}{Z_0}.
\end{equation}
In the next quench $H'\rightarrow H_1$,
\begin{equation}
\label{eq:Wext_Hprime_H1}
\beta\mathcal{W}^{\mathrm{ext}}_{H'\rightarrow H_1}(\rho')=-D(\rho'||\pi^{H_1})+\log\frac{Z_1}{Z'}.
\end{equation}
For each subsequent quench $H_{i}\rightarrow H_{i+1}$,
\begin{equation}
\label{eq:Wext_Hi_Hiplus1}
\beta\mathcal{W}^{\mathrm{ext}}_{H_i\rightarrow H_{i+1}}(\pi^{H_i})= -D(\pi^{H_i}||\pi^{H_{i+1}})+\log\frac{Z_{i+1}}{Z_i}.
\end{equation}
The total work extraction~\eqref{eq:total_Wext_Hprime} is then obtainable by summing~\eqref{eq:Wext_H0_Hprime},~\eqref{eq:Wext_Hprime_H1}, and~\eqref{eq:Wext_Hi_Hiplus1} for each $i\in[N]$. The log-partition functions cancel, and noting the appearance of $\Delta F_{H_0}(\rho_0):=\beta^{-1}D(\rho_0||\pi^{H_0})$ in~\eqref{eq:Wext_H0_Hprime}, the total {\it dissipated} work $\mathcal{W}^{\mathrm{diss}}:=\Delta F_{H_0}(\rho_0)-\mathcal{W}^{\mathrm{ext}}$ reduces to a sum of intermediary KL divergences:
\begin{equation}
\label{eq:total_dissipation_rhoprime}
\beta\mathcal{W}^{\mathrm{diss}}=D(\rho_0||\rho')+D(\rho'||\pi^{H_1})+\sum_{i=1}^ND(\pi^{H_i}||\pi^{H_{i+1}}).
\end{equation}
Accordingly, the problem of determining the optimal initial quench alongside the intermediaries, i.e., choosing all of ($H',H_1,...,H_N$), is identical to the original optimization problem of Sec.~\ref{sec:results} at fixed $H'=\beta^{-1}\log\rho_0$, but with the replacement of $N$ by $N+1$. Hence by permitting adjustment of $H'$ as part of the optimization, the maximum extractable work is improved from $\mathcal{W}_N^{\mathrm{ext}}$ to $\mathcal{W}_{N+1}^{\mathrm{ext}}$, where $\mathcal{W}^{\mathrm{ext}}_{N}$ (Eq.~\eqref{eq:Wext_def}) denotes the maximal work extraction in the original scenario (achieved by intermediaries~\eqref{eq:lambertsol}). 

At small $N$, this difference may be substantial. For large $N$, it becomes insignificant, vanishing as $O(N^{-2})$ in contrast with the $O(1)$ work extracted and $O(N^{-1})$ dissipated. Specifically, using~\eqref{eq:FR_asymptotic} (Appendix~\ref{ap:fisherrao}),
\begin{equation}
\begin{aligned}
&\mathcal{W}_{N+1}^{\mathrm{ext}}-\mathcal{W}_N^{\mathrm{ext}}=-\mathcal{W}_{N+1}^{\mathrm{diss}}+\mathcal{W}_N^{\mathrm{diss}}\\
&=-\frac{d_{\mathrm{FR}}(\rho_0,\pi^{H_0})}{2\beta}\left(\frac{1}{N+1}-\frac{1}{N}\right)+O(N^{-2})\\
&=\frac{d_{\mathrm{FR}}(\rho_0,\pi^{H_0})^2}{2\beta N(N+1)}+O(N^{-2})=O(N^{-2}).
\end{aligned}
\end{equation}

\section{Optimal intermediary distributions}
\label{ap:lambertproof}

Here we provide a derivation of~\eqref{eq:lambertsol}, the coupled Lambert equations determining the dissipation-minimizing solution. The problem is identical to the minimization of the `chained KL divergence' of Ref.~\cite{Pavlichin_2016}. In particular, the $k$-fold chained KL divergence between two distributions $(p,q)$ is 
\begin{equation}
D^{(k)}(p||q):=\underset{w_1,...,w_{k-1}}{\min}\sum_{i=1}^kD(w_i||w_{i-1}),
\end{equation} 
with $w_0=q$ and $w_k=p$. In the work extraction protocol of Sec.~\ref{sec:framework}, we seek optimal intermediaries $\{\rho_i\}_{i=1}^N$ for which $D^{(N+1)}(\rho_0||\pi^{H_0})$ is minimized.

\begin{proposition}
\label{prop:1}
    Let $\alpha$ and $\gamma$ be probability distribution on a finite sample space $X$, with $\gamma(x) > 0$ for all $x \in X$. Denoting $\rho_0:=\alpha$ and $\rho_{N+1}:=\gamma$, the optimal intermediary distributions $\{\rho_i\}_{i=1}^N$ minimizing $\sum_{i=1}^{N+1}D(\rho_{i-1}||\rho_{i})$ satisfy
    \begin{equation}
    \label{eq:prop1}
        \rho_i(x) = \frac{\rho_{i-1}(x)}{W(c_i \rho_{i-1}(x) / \rho_{i+1}(x))}, \  \ \ \ i=1,...,N,
    \end{equation}
    where $W$ is the principal Lambert $W$ function and the constants $c_i > 0$ are fixed by normalization conditions $\sum_{x \in X} \rho_i(x) = 1$.
\end{proposition}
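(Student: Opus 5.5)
We will obtain \eqref{eq:prop1} from the first-order (Lagrange) conditions, using the existence and uniqueness of an interior minimizer already established in Sec.~\ref{sec:results}. The objective $F(\rho_1,\dots,\rho_N)=\sum_{i=1}^{N+1}D(\rho_{i-1}\|\rho_i)$ is strictly convex on the product of open simplices. It diverges when any $\rho_i$ ($1\le i\le N$) approaches the boundary: the term $D(\rho_{i-1}\|\rho_i)$ blows up if $\rho_{i-1}(x)>0=\rho_i(x)$, and one checks inductively from $\rho_{N+1}=\gamma>0$ that any zero propagates to such a blow-up. Hence the unique minimizer lies in the interior, and the Karush--Kuhn--Tucker conditions reduce to Lagrange conditions with one multiplier $\mu_i$ for each normalization constraint $\sum_x\rho_i(x)=1$.

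The variable $\rho_i$ enters exactly two terms, $D(\rho_{i-1}\|\rho_i)$ and $D(\rho_i\|\rho_{i+1})$. Differentiating with respect to $\rho_i(x)$ gives
\begin{equation*}
-\frac{\rho_{i-1}(x)}{\rho_i(x)}+\log\frac{\rho_i(x)}{\rho_{i+1}(x)}+1=\mu_i .
\end{equation*}
Set $u=\rho_{i-1}(x)/\rho_i(x)$, assuming for now $\rho_{i-1}(x)>0$. Then $\log\rho_i(x)=\log\rho_{i-1}(x)-\log u$, and the condition becomes $u+\log u=\log(\rho_{i-1}(x)/\rho_{i+1}(x))+1-\mu_i$. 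Exponentiating yields $u e^{u}=c_i\,\rho_{i-1}(x)/\rho_{i+1}(x)$ with $c_i:=e^{1-\mu_i}>0$.

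Since $u>0$ and $ue^u$ is a bijection of $(0,\infty)$ onto itself, the unique solution is $u=W(c_i\rho_{i-1}(x)/\rho_{i+1}(x))$ on the principal branch. This is exactly \eqref{eq:prop1}, and $c_i$ is then fixed by normalization of $\rho_i$. If $\rho_0(x)=0$ (only possible at $i=1$, since interior $\rho_i$ are positive), the stationarity condition reads $\log(\rho_1(x)/\rho_2(x))+1=\mu_1$, so $\rho_1(x)=\rho_2(x)/c_1$. This agrees with the limit $W(z)\sim z$ as $z\to0$, so the formula holds by continuity.

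As a consistency check, multiplying the stationarity equation by $\rho_i(x)$ and summing over $x$ gives $-1+D(\rho_i\|\rho_{i+1})+1=\mu_i$. Hence $\mu_i=D(\rho_i\|\rho_{i+1})$ and $c_i=\exp(1-D(\rho_i\|\rho_{i+1}))$, matching the relation quoted in the main text.

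The main obstacle is rigorously justifying that the minimizer is interior, which is what allows us to drop the inequality constraints. The delicate case is $\alpha$ having zeros: the term $D(\rho_0\|\rho_1)$ then stays finite even if $\rho_1(x)\to 0$ at those $x$. For this case I would argue directly that the partial derivative of $D(\rho_1\|\rho_2)$ in $\rho_1(x)$ tends to $-\infty$ as $\rho_1(x)\to0$, because of the $\log\rho_1(x)$ term. Moving mass toward such an $x$ therefore strictly decreases $F$, which excludes boundary minimizers. The same argument applies to every $\rho_i$ with $i\ge1$.
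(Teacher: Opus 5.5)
Your proposal is correct and follows the paper's proof. Both use Lagrange multipliers for the normalization constraints, the substitution $u=\rho_{i-1}(x)/\rho_i(x)$, and inversion of $ue^u$ via the principal Lambert $W$ branch; your multiplier is related to the paper's by $\mu_i=-\lambda_i$. Your extra care goes beyond what the paper writes without changing the route. You justify interiority when $\alpha$ has zeros, where the paper's ``diverges at the boundaries'' remark fails (apply your gradient argument at the largest index $k$ with $\rho_k(x)=0$, so that $\rho_{k+1}(x)>0$), and you justify the choice of principal branch.
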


\begin{proof}
We impose normalization with Lagrange multipliers. Differentiating $\sum_{i = 1}^{N+1} D(\rho_{i-1} \| \rho_{i})$ with respect to $\rho_i(x)$ gives, for each $i \in [N]$ and $x\in X$, the stationarity condition
\begin{equation}
\label{eq:optimal_intermediaries}
    1+\lambda_i-\frac{\rho_{i-1}(x)}{\rho_i(x)}+\log\frac{\rho_i(x)}{\rho_{i+1}(x)}=0.
\end{equation}

Let $u(x):=\frac{\rho_{i-1}(x)}{\rho_i(x)}$. Rearranging Eq.~\eqref{eq:optimal_intermediaries} gives 
\begin{equation}
    u(x)e^{u(x)} = e^{1+\lambda_i}\frac{\rho_{i-1}(x)}{\rho_{i+1}(x)}. 
\end{equation} 
Thus, writing $c_i=e^{1+\lambda_i}>0$, 
\begin{equation}
    u(x)=W\left( \frac{c_i\rho_{i-1}(x)}{\rho_{i+1}(x)} \right),
\end{equation}
from which~\eqref{eq:optimal_intermediaries} follows by the definition of $u(x)$. The constants $\{c_i\}_{i=1}^N$ are fixed by the $N$ normalization conditions $\sum_x\rho_i(x)=1$.
\end{proof}

When $\rho_{i-1}(x)=0$, Eq.~\eqref{eq:prop1} is understood by continuity, yielding $\rho_i(x)=\rho_{i+1}(x)/c_i$. Note the relation $\lambda_i=-D(\rho_{i}||\rho_{i+1})$, obtained by multiplying~\eqref{eq:optimal_intermediaries} by $\rho_i(x)$ and summing over $x\in X$~\cite{nielsen2013symmetrical}. Moreover, by~\eqref{eq:H_i_opt}, the $(i+1)$st optimal Hamiltonian is obtainable by additive adjustment to the $i$th Hamiltonian,
\begin{equation}
 H_i(x)=H_{i-1}(x)+\beta^{-1}\log W\left(\frac{c_i\rho_{i-1}(x)}{\rho_{i+1}(x)}\right).
\end{equation}

\section{Computing the optimal intermediaries}
\label{ap:algorithm}

Here we describe a nonlinear least-squares implementation used to generate this paper's distributions and Hamiltonians. We have shown that the optimal intermediary distributions of~\eqref{eq:lambertsol} exist and are unique; here, we provide a straightforward algorithm to compute them. The solution we compute is denoted $\{\hat{\rho}_i(x)\}_{i\in[N],x\in X}$. We construct residuals from Eq.~\eqref{eq:lambertsol} and normalization. Namely,
\begin{equation}
\label{eq:rix}
    r_i(x) := \hat\rho_i(x) - \frac{\hat\rho_{i-1}(x)}{W\left(
    \frac{\hat\rho_{i-1}(x)c_i}{\hat\rho_{i+1}(x)}
    \right)}\\
\end{equation}
for each $i\in[N]$ and $x\in X$, and
\begin{equation}
\label{eq:qi}
    q_i : = \sum_{x\in X}\hat{\rho}_i(x)-1,
\end{equation}
for each $i\in [N]$. All $N(M+1)$ of the residuals~\eqref{eq:rix},~\eqref{eq:qi} equal zero for the optimal solution~\eqref{eq:lambertsol}. Accordingly, we can minimize the quantity
\begin{equation}
    R=\sum_{i\in[N]} \left(\sum_{x\in X}r_i(x)^2+q_i^2\right),
\end{equation}
subject to positivity constraints on each probability value. A nonlinear least squares solver in {\it SciPy} \cite{2020SciPy-NMeth} returns a unique solution corresponding to $R=0$ up to machine precision. The solver requires specification of an initial guess for the values being computed; we take the linear interpolation  $\frac{1}{N+1}[(N + 1 - i)\alpha(x) + i\gamma(x)]$ for $\hat{\rho}_i(x)$, and the value $1$ for $c_i$.

\section{Background on Fisher-Rao metric}
\label{ap:fr_background}

The space of probability distributions is equipped with a Riemannian metric given by the Fisher information matrix (Fisher-Rao metric) \cite{Ay_2019,amari2016information}. For a distribution $p_{\boldsymbol{\theta}}(x)$ parameterized by a vector $\boldsymbol{\theta}=(\theta_1,\ldots,\theta_m)$ the Fisher-Rao metric $g=(g_{jk})_{(j,k)\in[m]^2}$ is defined componentwise by
\begin{equation}
\label{eq:gjk}
    g_{jk}(\boldsymbol{\theta}) := \mathbb{E}_{p_{\boldsymbol{\theta}}}\left[ \frac{\partial\log p_{\boldsymbol{\theta}}(x)}{\partial\theta_j}\, \frac{\partial \log p_{\boldsymbol{\theta}}(x)}{\partial\theta_k} \right].
\end{equation}
The Fisher-Rao metric can be obtained by expanding the KL divergence of $p_{\boldsymbol{\theta}}(x)$ to $p_{\boldsymbol{\theta}+\Delta\boldsymbol{\theta}}(x)$ to second order in $\Delta\boldsymbol{\theta}(x)$. Note that this is a local metric and must be evaluated at a point on the simplex. A Fisher-Rao geodesic is a locally length minimizing curve with respect to this metric.

In this work we consider categorical distributions $p=(p(x))_{x\in X}$ over finite sample spaces $X$ of size $M=|X|$. By normalization, the number of independent degrees of freedom is $m=M-1$. We parameterize $p=p_{\boldsymbol{\theta}}$ directly by the first $M-1$ probabilities, taking $\theta_i=p(x_i)$, so that $p_{\boldsymbol{\theta}}(x_i)=p(x_i)$ for $i\in\{1,...,M-1\}$ and $p_{\boldsymbol{\theta}}(x_M)=1-\sum_{i=1}^Mp(x_i)$. For general $i$,
\begin{equation}
\frac{\partial\log p(x_i)}{\partial p(x_j)}=\frac{\delta_{i,j}(1-\delta_{i,M})}{p(x_i)}-\frac{\delta_{i,M}}{p(x_M)}
\end{equation}
from which, for $(j,k)\in[M-1]^2$, the components~\eqref{eq:gjk} of $g(p_{\boldsymbol{\theta}}):=g(\boldsymbol{\theta})$ are given by
\begin{equation}
    g_{jk}(p)=\frac{\delta_{j,k}}{p(x_j)}+\frac{1}{p(x_M)}.
\end{equation}

For two distributions $\alpha$ and $\gamma$, and the set of all continuous paths $C(\alpha,\gamma)$ connecting them, with each element $p\in C(\alpha,\gamma)$ parameterized by the first $M-1$ components of a normalized distribution: $p(t)=(p_1(t),...,p_{M-1}(t))$; also denote $\dot{p}(t):=\frac{d}{dt}p(t)$. Define the local inner product of two vectors $(u,v)\in\mathbb{R}^{M-1}\times\mathbb{R}^{M-1}$ with respect to the metric $g(p)$ by
\begin{equation}
v^Tg(p) u: = \sum_{j=1}^{M-1}\sum_{k=1}^{M-1}v_jg_{jk}(p)u_{k}.
\end{equation}

 The path length of contour $p=(p(t))_{t\in[0,1]}$ is
 \begin{equation}
 \label{eq:path_length}
    d_g[p]:=\int_0^1\sqrt{ \dot{p}(t)^Tg(p(t))\dot{p}(t) dt},
\end{equation}
and the Fisher-Rao distance is defined as
 \begin{equation}
 \label{eq:FR_def}
    d_{\mathrm{FR}}(\alpha, \gamma) := \min_{p\in C(\alpha,\gamma)} d_g[p].
    \end{equation}
    We also define the `action' \cite{jost_riemannian_2017} or `thermodynamic divergence'~\cite{crooks2007measuring} of a path as
    \begin{equation}
    \mathcal{A}_g[p]:=\int_0^1 \dot{p}(t)^T g(p(t))\dot{p}(t)\, dt,
\end{equation}
The path of minimum thermodynamic divergence is the geodesic:
\begin{equation}
p^*_{\alpha,\beta}=(p^*_{\alpha,\beta}(t))_{t\in[0,1]}=\underset{p\in C(\alpha,\gamma)}{\mathrm{argmin}}\mathcal{A}_g[p].
\end{equation}

The Cauchy-Schwarz inequality in the form $\int_0^1 f(t)dt\le[\int_0^1 f(t)^2dt]^{\frac{1}{2}}$ can be applied, in this case
$\int_0^1 [\dot{p}^Tg(p)p]^{\frac{1}{2}}\, dt\le[\int_0^1\dot{p}^Tg(p)p \, dt]^{\frac{1}{2}}$, or equivalently $\mathcal{A}_g[p]\ge d_{g}[p]^2$ for any path $p\in C(\alpha,\gamma)$. Moreover, equality is achieved only on the geodesic:
\begin{equation}
\sqrt{\mathcal{A}_g[p^*_{\alpha,\gamma}]}=d_{g}[p^*_{\alpha,\gamma}]=d_{\mathrm{FR}}(\alpha,\gamma).
\end{equation}

 For two categorical distributions $p(x)$ and $q(x)$, consider the change of variables $\mu_i=\sqrt{p(x_i)}$, $\nu_i=\sqrt{q(x_i)}$. Then $\mu=(\mu_i)_{i=1}^M$ and $\nu=(\nu_i)_{i=1}^M$ reside in the positive orthant of the unit $(M-1)$-sphere: $\sum_{i=1}^M\mu_i^2=\sum_{i=1}^M\nu_i^2=1$. The great circle connecting $\mu$ and $\nu$ corresponds to the Fisher-Rao geodesic~\cite{amari2016information}. The Fisher-Rao distance~\eqref{eq:FR_def} between $p$ and $q$ has the value~\cite{miyamoto_fr_distance} 
\begin{equation}
\label{eq:FR_arccos}
d_{\mathrm{FR}}(p,q)=2\mathrm{arccos}(\mu^T \nu),
\end{equation}
with $\mu^T\nu=\sum_{i=1}^M\mu_i\nu_i=\sum_{i=1}^M\sqrt{p(x_i)q(x_i)}$.

\section{Leading-order minimal dissipated work}
\label{ap:fisherrao}

The minimal dissipated work $\mathcal{W}_N^{\mathrm{diss}}$ in the $N$-step finite relaxation protocol of Sec.~\ref{sec:framework} is equal to $D^{(N+1)}(\rho_0||\pi^{H_0})$, a `chained KL divergence' in the terminology of Ref.~\cite{Pavlichin_2016}. The $N\rightarrow\infty$ asymptotic of $\mathcal{W}_{N}^{\mathrm{diss}}$ has long been established for step-equilibration processes~\cite{Nulton1985,Diosi2000}; namely, Eq.~\eqref{eq:FR_asymptotic}. This follows from Proposition~\ref{prop:2} below, and can also be deduced from the results of Ref.~\cite{Pavlichin_2016}.

\begin{proposition}
\label{prop:2}
Let $\mathcal{W}^{\mathrm{diss}}_N$ be defined as in~\eqref{eq:WdissN} and denote $\rho_0=\alpha$ and $\pi^{H_0}=\gamma$. Assume $\alpha$ and $\gamma$ have the same support. Then
\begin{equation}
\label{eq:FR_asymptotic_app}
    \mathcal{W}^{\mathrm{diss}}_N
    =
    \frac{d_{\mathrm{FR}}(\alpha,\gamma)^2}
    {2\beta N}
    +
    O(N^{-2}),
\end{equation}
as $N\rightarrow\infty$, where $d_{\mathrm{FR}}(\alpha,\gamma)$ is the Fisher-Rao distance between $\alpha$ and $\gamma$.
\end{proposition}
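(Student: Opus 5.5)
The plan is to prove matching upper and lower bounds on $\beta\mathcal{W}^{\mathrm{diss}}_N=\min\sum_{i=1}^{N+1}D(\rho_{i-1}\|\rho_i)$, each equal to $d^2/(2(N+1))$ up to $O(N^{-2})$, where $d:=d_{\mathrm{FR}}(\alpha,\gamma)$. Since $1/(N+1)=1/N+O(N^{-2})$, this gives \eqref{eq:FR_asymptotic_app}. Because $\alpha$ and $\gamma$ share a support, I restrict $X$ to that support throughout, so both endpoints lie in the open simplex. Two facts are used repeatedly. First, the second-order expansion of KL recalled in Appendix~\ref{ap:fr_background}: $D(p\|q)=\tfrac12 d_{\mathrm{FR}}(p,q)^2+O(d_{\mathrm{FR}}(p,q)^3)$, with a constant that is uniform on any compact subset of the interior. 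Second, the square-root embedding $\mu=\sqrt{p}$ with \eqref{eq:FR_arccos}.

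\emph{Upper bound.} I will take the FR geodesic $p^*(t)$, which in $\mu$-coordinates is the great-circle arc from $\sqrt\alpha$ to $\sqrt\gamma$. Every point on this arc has strictly positive coordinates, because it is a nonnegative combination of two strictly positive vectors. The arc is therefore a compact subset $K$ of the interior. I choose $\rho_i=p^*(i/(N+1))$, so that consecutive points are at FR distance exactly $d/(N+1)$. The uniform cubic Taylor bound on $K$ then gives
\[
\sum_{i=1}^{N+1}D(\rho_{i-1}\|\rho_i)\le (N+1)\Bigl[\tfrac12\bigl(\tfrac{d}{N+1}\bigr)^2+C\bigl(\tfrac{d}{N+1}\bigr)^3\Bigr]=\frac{d^2}{2(N+1)}+O(N^{-2}).
\]

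\emph{Lower bound.} Here the difficulty is that an arbitrary sequence, including the optimal one, could a priori approach the boundary of the simplex, where the Taylor constant blows up. I plan to avoid any compactness argument by using a global pointwise inequality. Write $D(p\|q)=\sum_x p(x)\,\phi(t_x)$ with $t_x=\sqrt{q(x)/p(x)}$ and $\phi(t)=-2\log t$. I will verify the scalar inequality
\[
-2\log t\ \ge\ 2(1-t)+(1-t)^2-\kappa|1-t|^3\qquad\text{for all } t>0,
\]
for some absolute constant $\kappa$. This should hold because the two sides agree to second order at $t=1$, the left side tends to $+\infty$ as $t\to0$, and the right side tends to $-\infty$ like $-t^3$ as $t\to\infty$; checking it is a one-variable calculus exercise. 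Summing against $p$, the first two terms combine to
\[
\sum_x p(1-t)+\sum_x(\sqrt p-\sqrt q)^2=2(1-\mathrm{BC})+2(1-\mathrm{BC})=4\bigl(1-\cos(\theta/2)\bigr),
\]
where $\mathrm{BC}=\sum_x\sqrt{p(x)q(x)}=\cos(\theta/2)$ by \eqref{eq:FR_arccos} and $\theta=d_{\mathrm{FR}}(p,q)$. This expression equals $\theta^2/2-O(\theta^4)$, which is the correct leading constant. (By contrast, the Rényi-$\tfrac12$ bound $D\ge-2\log\mathrm{BC}$ only gives $\theta^2/4$, losing a factor of two.)

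\emph{Main obstacle: the cubic remainder.} The remainder is $\kappa\sum_x|\sqrt p-\sqrt q|^3/\sqrt p$, and the task is to bound it by $C\theta^3$. This needs $p$ bounded below. My first attempt will be an a priori bound showing that the optimal intermediaries satisfy $\rho_i(x)\ge m>0$ uniformly in $N$ and $i$. I would try to get this from the stationarity relation of Proposition~\ref{prop:1}. In the form \eqref{eq:optimal_intermediaries}, that relation reads $\log\rho_i-\log\rho_{i+1}=\rho_{i-1}/\rho_i-1-\lambda_i$, a discrete second-order recurrence in $\log\rho_i(x)$ for each fixed $x$. I would aim for a discrete maximum-principle argument that traps $\log\rho_i(x)$ between constants determined by $\alpha$, $\gamma$, and the (bounded) multipliers $\lambda_i=-D(\rho_i\|\rho_{i+1})$. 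Once this is in place, we have $D_i\ge g(\theta_i)$ with $g(\theta)=\theta^2/2-C\theta^3$ along the optimal sequence, where $D_i:=D(\rho_{i-1}\|\rho_i)$ and $\theta_i:=d_{\mathrm{FR}}(\rho_{i-1},\rho_i)$.

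\emph{Concluding the lower bound.} By the upper bound, each $D_i=O(1/N)$, and hence each $\theta_i=O(N^{-1/2})$ is small. So all $\theta_i$ lie in a range where $g$ is both convex and increasing. By the triangle inequality for $d_{\mathrm{FR}}$, $\sum_i\theta_i\ge d$. Jensen's inequality and monotonicity of $g$ then give
\[
\sum_i D_i\ \ge\ (N+1)\,g\!\left(\frac{d}{N+1}\right)=\frac{d^2}{2(N+1)}-O(N^{-2}).
\]
This matches the upper bound. Note that this route needs no information on how evenly the step lengths are distributed, which a cruder estimate of the form $\sum\theta_i^3\le\max_i\theta_i\sum\theta_i^2$ would require.
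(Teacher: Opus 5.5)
Your proof is correct, and it takes a genuinely different and more complete route than the paper's.

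\textbf{What the paper does.} It fixes a smooth path $p(t)$ from $\alpha$ to $\gamma$ and samples it at equally spaced times. It expands each KL term to second order, turns the Riemann sum into the action $\mathcal{A}_g[p]$, and then minimizes over paths using $\mathcal{A}_g[p]\ge d_g[p]^2$, with equality on the geodesic. This is essentially your upper bound.

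\textbf{Where the paper is incomplete.} For the lower bound, the paper simply identifies the minimum over arbitrary intermediaries with the minimum over equispaced samples of fixed smooth paths. It also interchanges that minimization with an $O(N^{-2})$ remainder whose constant depends on the path's derivatives and on its distance from the boundary. It never excludes an $N$-dependent, unevenly spaced or near-boundary sequence doing better.

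\textbf{What your argument adds.} Your lower bound closes exactly this gap. The global pointwise inequality gives $D(p\|q)\ge 4(1-\cos(\theta/2))-\kappa\sum_x|\sqrt{p(x)}-\sqrt{q(x)}|^3/\sqrt{p(x)}$ with the sharp leading constant $1/2$; the paper's Lemma~\ref{lem:KL_lower_FR} yields only $1/4$. The triangle inequality plus Jensen applied to $g(\theta)=\theta^2/2-C\theta^3$ then removes any need to control how the step lengths are distributed. The paper's route buys brevity and the picture of $d_{\mathrm{FR}}^2$ as a minimal action. Yours buys an actual two-sided proof, with constants depending only on $\min_x\min(\alpha(x),\gamma(x))$.

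\textbf{The step you only sketched.} The uniform lower bound on the optimal intermediaries does go through, and more simply than you expect: no bound on the multipliers is needed.
\begin{itemize}
\item Insert $\lambda_i=-D(\rho_i\|\rho_{i+1})$ into the stationarity condition. Then $\delta_i:=\log(\rho_{i-1}(x)/\rho_i(x))$ satisfies $\delta_{i+1}=e^{\delta_i}-1+D(\rho_i\|\rho_{i+1})\ge\delta_i$.
\item Hence $i\mapsto\log\rho_i(x)$ is concave on $\{0,\dots,N+1\}$, so $\rho_i(x)\ge\min(\alpha(x),\gamma(x))=:m_x$.
\item Interiority of the minimizer, which the stationarity conditions require, holds because any finite-cost sequence inherits full support from $\alpha$ step by step.
\item Combine this with $|\sqrt{p(x)}-\sqrt{q(x)}|\le\|\sqrt p-\sqrt q\|_2=2\sin(\theta/4)\le\theta/2$. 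The cubic remainder is then at most $\kappa\theta^3/(8\sqrt{m})$, where $m=\min_x m_x$.
\end{itemize}

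\textbf{The scalar inequality.} It holds with $\kappa=2/3$. For $t\le 1$ it follows with $\kappa=0$ from the power series of $-2\log(1-s)$. For $t=1+u>1$, set $f(u)=2u-u^2+\tfrac23u^3-2\log(1+u)$; then $f(0)=0$ and $f'(u)=2u^3/(1+u)\ge 0$.
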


\begin{proof}
    Consider a the unit simplex over finite sample space $X$, and a smooth parameterized path $p\in C(\alpha,\gamma)$ specified by its first $M-1$ components $p_1(t),...,p_{M-1}(t)$, with argument $t\in [0, 1]$ and endpoints $p_i(0) = \alpha(x_i)$ and $p_i(1) = \gamma(x_i)$ for $i=1,...,M-1$. We consider $N + 2$ equally spaced values of $t$, denoted $t_i = i/(N+1)\in[0,1]$ for $i=0,...,N+1$ ($t_0=0$, $t_{N+1}=1$) and denote $\Delta t=t_{i+1}-t_i=\frac{1}{N+1}=O(N^{-1})$.

 Denoting $g(p)$ for the Fisher-Rao metric on the unit simplex at $p$~\cite{amari2016information} with components given by~\eqref{eq:gjk}, for each pair $(p(t_i),p(t_{i+1}))$, the KL divergence can be expanded as
    \begin{equation}
        D(p(t_i) \| p(t_i + \Delta t)) = 
        \frac{1}{2} \dot{p}(t_i)^Tg(p(t_i))\dot{p}(t_i) + O(\Delta t^3).
    \end{equation}
   
Writing $D_N:=\sum_{i = 0}^{N}D(p(t_i) \| p(t_{i+1}))$, we obtain
    \begin{equation}
        D_N= \frac{1}{2} \sum_{i=0}^N\dot{p}(t_i)^Tg(p(t_i))\dot{p}(t_i)\Delta t^2 + O(\Delta t^2).
\end{equation}

    Since $\dot{p}^T(t) g(p(t)) \dot{p}(t)$ is continuous, the Riemann sum can be replaced with the integral:
    \begin{equation}
        \sum_{i = 0}^N \dot{p}(t_i)^Tg(p(t_i))\dot{p}(t_i)\Delta t = \int_{0}^1 \dot{p}(t)^Tg(p(t))\dot{p}(t)dt + O(\Delta t).
    \end{equation}
    
    Therefore,
    \begin{align}
        D_N=
        \frac{1}{2(N+1)}
        \int_0^1
        \dot{p}(t)^Tg(p(t))\dot{p}(t) dt
        +O(N^{-2}).
    \end{align}
  
 Writing $\mathcal{W}_N^{\mathrm{diss}}=\beta^{-1}D_N$ and minimizing $D_N$ over the set $C(\alpha,\gamma)$ of continuous paths $p(t)$ from $\alpha$ to $\gamma$ yields the minimum thermodynamic divergence $\mathcal{A}(\alpha,\gamma)$ between $\alpha$ and $\gamma$:
\begin{equation}
\begin{aligned}
\label{eq:Wdiss_FR_ap}
       \mathcal{W}^{\mathrm{diss}}_N &= \min_{p\in\mathcal{C}(\alpha,\gamma)}\frac{\mathcal{A}(\alpha,\gamma)}{2\beta (N + 1)} + O(N^{-2})\\
       = &\frac{\mathcal{A}_g[p^*_{\alpha,\beta}]}{2\beta (N + 1)} + O(N^{-2})= \frac{d_{\mathrm{FR}}(\alpha,\gamma)^2}{2\beta (N + 1)} + O(N^{-2})
   \end{aligned} 
   \end{equation}
having used that the geodesic saturates the bound $\mathcal{A}_g[p]\ge d_g[p]^2$ (Appendix.~\ref{ap:fr_background}). Using $\frac{1}{N+1}=\frac{1}{N}+O(N^{-2})$ in \ref{eq:Wdiss_FR_ap} we arrive at~\eqref{eq:FR_asymptotic_app}.

\end{proof}

\section{Lower bound on optimal work extraction}
\label{ap:Wext_lower}

To obtain the bound~\eqref{eq:Wext_lower_FR}, we first write the extracted work with optimal intermediaries (Eq.~\eqref{eq:Wext_def}) using~\eqref{eq:FR_asymptotic}, as
\begin{equation}
   \mathcal{W}^{\mathrm{ext}}_N
        =
        \beta^{-1}D(\rho_0\Vert\pi^{H_0})
        -
        \frac{d_{\mathrm{FR}}(\rho_0,\pi^{H_0})^2}
        {2\beta N}
        +
        O(N^{-2}),
\end{equation}
which we lower bound by making use of the following Lemma.

\begin{lem}
\label{lem:KL_lower_FR}
For any two categorical distributions $(p,q)$ over the finite sample space $X$, it holds that
\begin{equation}
\label{eq:lemma_KL_lower_FR}
D(p||q)\geq -2\log \cos\frac{d_\mathrm{FR}(p,q)}{2},
\end{equation}
and in particular,
\begin{equation}
D(p||q)\ge \frac{d_{\mathrm{FR}}(p,q)^2}{4}.
\end{equation}
\end{lem}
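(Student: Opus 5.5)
The plan is to relate the KL divergence to the Bhattacharyya coefficient $\mathrm{BC}(p,q):=\sum_{x\in X}\sqrt{p(x)q(x)}$, which by \eqref{eq:FR_arccos} equals $\cos\frac{d_{\mathrm{FR}}(p,q)}{2}$. The first inequality \eqref{eq:lemma_KL_lower_FR} is then equivalent to $D(p\|q)\ge -2\log \mathrm{BC}(p,q)$, i.e.\ the statement that KL dominates the Rényi divergence of order $\tfrac12$. The second inequality will follow from an elementary scalar inequality applied to $\theta:=d_{\mathrm{FR}}(p,q)/2\in[0,\pi/2]$.

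For the first step, restrict the sum to the support $S$ of $p$; terms with $p(x)=0$ contribute nothing to $D(p\|q)$ and can only increase $\mathrm{BC}$. If $q$ vanishes somewhere on $S$, then $D(p\|q)=\infty$ and there is nothing to prove. Otherwise, write
\begin{equation*}
-2\log \mathrm{BC}(p,q) = -2\log \sum_{x\in S} p(x)\sqrt{\frac{q(x)}{p(x)}} \le -2\sum_{x\in S}p(x)\log\sqrt{\frac{q(x)}{p(x)}} = D(p\|q),
\end{equation*}
using Jensen's inequality for the concave function $\log$ with weights $p(x)$. Substituting $\mathrm{BC}=\cos(d_{\mathrm{FR}}/2)$ from \eqref{eq:FR_arccos} gives \eqref{eq:lemma_KL_lower_FR}. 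When $\mathrm{BC}=0$, i.e.\ $d_{\mathrm{FR}}=\pi$, the right-hand side is $+\infty$, but so is $D(p\|q)$, since the supports are disjoint; this edge case should be stated explicitly.

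For the second step, show that $-2\log\cos\theta\ge \theta^2$ on $[0,\pi/2)$. Set $f(\theta)=-2\log\cos\theta-\theta^2$. Then $f(0)=0$ and $f'(\theta)=2\tan\theta-2\theta\ge0$, because $\tan\theta\ge\theta$ on this interval. With $\theta=d_{\mathrm{FR}}/2$ this gives $D(p\|q)\ge d_{\mathrm{FR}}^2/4$, and at $\theta=\pi/2$ the bound holds trivially.

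I expect no step to be a serious obstacle. The only care needed is in the support and boundary cases: zeros of $p$ or $q$, and $\mathrm{BC}=0$. The key recognition is that \eqref{eq:FR_arccos} turns the Fisher–Rao distance into the Bhattacharyya coefficient, after which the argument reduces to Jensen's inequality.
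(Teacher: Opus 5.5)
Your proposal is correct and follows essentially the same route as the paper. Jensen's inequality with weights $p(x)$ gives $D(p\|q)\ge -2\log\sum_{x}\sqrt{p(x)q(x)}$, which is twice the Bhattacharyya distance; the identity \eqref{eq:FR_arccos} turns this into $-2\log\cos\frac{d_{\mathrm{FR}}(p,q)}{2}$; and the scalar bound $-2\log\cos\theta\ge\theta^2$ finishes the argument. The differences are minor: the paper only cites $\cos y\le e^{-y^2/2}$ on $[0,\pi/2)$, whereas you prove the equivalent inequality via $f'(\theta)=2\tan\theta-2\theta\ge 0$, and you also treat the zero-probability and disjoint-support edge cases that the paper leaves implicit.
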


\noindent {\it Proof.} We let $M:=|X|$ and write the components of $(p,q)$ as $(p(x),q(x))$ for $x\in X$. The Bhattacharaya distance $d_B(p,q)$~\cite{bhattacharyya1946measure} is defined as
\begin{align}
\label{eq:Bhattacharaya}
d_\mathrm{B}(p,q) :=-\log\sum_{x\in X}\sqrt{p(x) q(x)} .
\end{align}
Note that
\begin{equation}
\begin{aligned}
\label{eq:KL_B}
    D(p||q)&=\sum_{x\in X} p(x)\log(p(x)/q(x)) = -2 \sum_i p_i\log\frac{\sqrt{q(x)}}{\sqrt{p(x)}}\\
    &\geq -2\log\sum_{x\in X} p(x)\frac{\sqrt{q(x)}}{\sqrt{p(x)}}= 2d_\mathrm{B}(p,q),
\end{aligned}
\end{equation}
where we used Jensen's inequality. Recall~\eqref{eq:FR_arccos} for the Fisher-Rao distance between two categorical distributions, from which
\begin{equation}
\label{eq:cos}
    \sum_{i=1}^{M}\sqrt{p(x) q(x)}=\cos\frac{d_\mathrm{FR}(p,q)}{2}.
\end{equation}
Using~\eqref{eq:cos} in~\eqref{eq:Bhattacharaya}, we have 
\begin{equation}
    d_\mathrm{B}(p,q) =-\log \cos\frac{d_\mathrm{FR}(p,q)}{2},
\end{equation}
which, when applied in~\eqref{eq:KL_B}, yields Eq.~\eqref{eq:lemma_KL_lower_FR},
\begin{equation}
\label{eq:KL_ineq_logcos}
    D(p||q)\geq -2\log \cos\frac{d_\mathrm{FR}(p,q)}{2}.
\end{equation}
Using $\cos y\leq e^{-y^2/2}$ for $0\leq y<\pi/2$, and hence $-2\log(\cos y)\geq y^2$,
\begin{equation}
    D(p||q)\geq \frac{d_\mathrm{FR}(p,q)^2}{4}.
\end{equation}
$\square$

\ \\
From Lemma~\ref{lem:KL_lower_FR} with $p=\rho_0$ and $q=\pi^{H_0}$, the work extraction bound~\eqref{eq:Wext_lower_FR} presented in Sec.~\ref{sec:results} follows:
\begin{align}
        \mathcal{W}^{\mathrm{ext}}_N
        &=
        \beta^{-1}D(\rho_0\Vert\pi^{H_0})
        -
        \frac{d_{\mathrm{FR}}(\rho_0,\pi^{H_0})^2}
        {2\beta N}
        +
        O(N^{-2})\\
        &\gtrsim \beta^{-1}d_{\mathrm{FR}}(\rho_0,\pi^{H_0})^2\left(\frac{1}{4}-\frac{1}{2N}\right).
\end{align}
Moreover, a stronger bound is obtained via the first inequality of Lemma~\ref{lem:KL_lower_FR}, the penultimate~\eqref{eq:KL_ineq_logcos} above. Namely,
\begin{align}
\beta\mathcal{W}^{\mathrm{ext}}_N
        \gtrsim -2\log\cos\frac{d_{\mathrm{FR}}(\rho_0,\pi^{H_0})}{2}-\frac{d_{\mathrm{FR}}(\rho_0,\pi^{H_0})^2}{2N}.
\end{align}

\section{Properties of $N=1$ optimal intermediary}
\label{ap:propsoflambert}

Given a finite set of microstates $X$ and initial and final distributions $\alpha(x)$ and $\gamma(x)$, the optimal intermediary for $N=1$ is given by~\eqref{eq:N1_solution}, namely,
\begin{equation}
\label{eq:N1_intermed}
    \rho(x) = \frac{\alpha(x)}{W(c\alpha(x)/\gamma(x))},
\end{equation}
where $c$ is determined by normalization of $\rho$. The same Lambert-$W$ form includes the Jeffreys-centroid problem for the corresponding arithmetic- and geometric-mean endpoints~\cite{nielsen2013symmetrical}; arbitrary endpoint pairs need not define a Jeffreys centroid. The minimal dissipated work is specified by
\begin{equation}
\begin{aligned}
\label{eq:Wdiss_N1_1}
    \beta \mathcal{W}^\mathrm{diss}_1 &= \sum_{i=1}^{N+1}D(\rho_{i-1} ||\rho_i) = D(\alpha||\rho)+D(\rho||\gamma)\\
    &=\sum_{x\in X}\left(\alpha(x)\log \frac{\alpha(x)}{\rho(x)} + \rho(x)\log \frac{\rho(x)}{\gamma(x)}\right).
\end{aligned}
\end{equation}

Applying~\eqref{eq:N1_intermed} in~\eqref{eq:Wdiss_N1_1} and rearranging,
\begin{equation}
\begin{aligned}
    &\beta \mathcal{W}_1^\mathrm{diss} =\sum_{x\in X}\alpha(x)\left( 1 + \log W\left(\frac{c\alpha(x)}{\gamma(x)}\right) - \frac{\log c}{W\left(\frac{c\alpha(x)}{\gamma(x)}\right)}
    \right)\\
    &\quad = \sum_{x\in X}\left( \alpha(x) + \alpha(x)\log W\left(\frac{c\alpha(x)}{\gamma(x)}\right) - \rho(x)\log c
    \right)
    \\
    &\quad = 1-\log c+\sum_{x\in X}\alpha(x)\log W\left(\frac{c\alpha(x)}{\gamma(x)}\right),
\end{aligned}
\end{equation}
where in the last line we use the fact that $\alpha(x)$ and $\rho(x)$ are normalized. Noting $\log W(z) = \log z-W(z)$ for $z>0$, we obtain
\begin{equation}
\begin{aligned}
    \beta \mathcal{W}_1^\mathrm{diss} &= \log \frac{e}{c} + \sum_{x\in X} \alpha(x) \left(
    \log\left(\frac{c\alpha(x)}{\gamma(x)}\right)-
    W\left(\frac{c\alpha(x)}{\gamma(x)}\right)
    \right)\\
    &= 1+D(\alpha||\gamma)- \left\langle W\left(\frac{c\alpha(x)}{\gamma(x)}\right) \right\rangle_\alpha ,
\end{aligned}
\end{equation}
where $\langle\cdot\rangle_\alpha$ is the expectation under $\alpha$. 

The maximal extracted work with $N=1$ intermediary is then $\mathcal{W}^{\mathrm{ext}}_1:=\beta^{-1}D(\alpha||\gamma)-\mathcal{W}_1^\mathrm{diss}$. It is given by
\begin{equation}
\label{eq:Wext_1}
\mathcal{W}^{\mathrm{ext}}_1=\frac{1}{\beta}\left[\left\langle W\left(\frac{c\alpha(x)}{\gamma(x)}\right) \right\rangle_\alpha-1\right]
\end{equation}
With Jensen's inequality,~\eqref{eq:Wext_1} can be upper bounded:
\begin{equation}
\label{eq:upperWext1}
\mathcal{W}^{\mathrm{ext}}_1\le \frac{1}{\beta}\left[W\left(c\sum_{x\in X}\frac{\alpha(x)^2}{\gamma(x)}\right) -1\right].
\end{equation}
 
We also seek an upper bound on $\mathcal{W}^{\mathrm{ext}}_1$ across all initial distributions $\alpha$ for a fixed final distribution $\gamma$. Note that $\sum_{x\in X}\alpha(x)^2/\gamma(x)=\left\langle \alpha(x)/\gamma(x)\right\rangle_\alpha$ is maximized at $\alpha(x)=\delta_{x,x_{\mathrm{min}}(\gamma)}$, with $x_{\mathrm{min}}(\gamma):=\mathrm{argmin}_{x\in X}\gamma(x)$, in which case the sum equals $1/\gamma(x_{\mathrm{min}})$. The value of $c$ achieved in this case is its absolute upper bound, $c=e$. Using monotonicity of $W$, we thus obtain an $\alpha$-independent upper bound:
\begin{equation}
\label{eq:upperWext1_2}
\mathcal{W}^{\mathrm{ext}}_1\le \frac{1}{\beta}\left[W\left(\frac{e}{\min_{x\in X}\gamma(x)}\right) -1\right],
\end{equation}

We can show that $\mathcal{W}^{\mathrm{ext}}_1\ge 0$ using the following Lemma.

\begin{lem}
\label{lemma_d1}
Assume $\alpha$ and $\gamma$ have full support on $X$. Let
\begin{equation}
\label{eq:Q_def}
   Q(\alpha,\gamma) := D(\alpha || \gamma) - \min_{\rho} \{D(\alpha || \rho) + D(\rho || \gamma)\}.
\end{equation}
    Then $Q(\alpha,\gamma) \ge 0$, with equality if and only if $\alpha = \gamma$.
\end{lem}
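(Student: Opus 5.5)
The plan is to bound the minimum in \eqref{eq:Q_def} by evaluating it at two simple trial intermediaries. This gives $Q\ge 0$ at once. The equality statement then comes from showing that, when $\alpha\neq\gamma$, the endpoints $\rho=\alpha$ and $\rho=\gamma$ are not minimizers.

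\emph{Step 1 (nonnegativity).} Taking $\rho=\gamma$ in the minimization gives
\[
\min_\rho\{D(\alpha\|\rho)+D(\rho\|\gamma)\}\le D(\alpha\|\gamma)+0,
\]
so $Q(\alpha,\gamma)\ge 0$. Conversely, if $\alpha=\gamma$, then $D(\alpha\|\gamma)=0$ and the minimum is $0$, attained at $\rho=\alpha$. Hence $Q=0$.

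\emph{Step 2 (strictness).} Assume $\alpha\neq\gamma$. I will show the objective $f(\rho):=D(\alpha\|\rho)+D(\rho\|\gamma)$ strictly decreases when $\rho$ is moved from $\gamma$ toward $\alpha$. Set $\rho_t=(1-t)\gamma+t\alpha$, which has full support. Differentiating at $t=0$, the term $D(\rho_t\|\gamma)$ has zero derivative, since it is minimized at $t=0$; explicitly its derivative is $\sum_x(\alpha-\gamma)(\log(\rho_t/\gamma)+1)$, which vanishes at $t=0$. The derivative of $D(\alpha\|\rho_t)=\sum_x\alpha\log\alpha-\sum_x\alpha\log\rho_t$ at $t=0$ is
\[
-\sum_x \alpha(x)\frac{\alpha(x)-\gamma(x)}{\gamma(x)}=1-\sum_x\frac{\alpha(x)^2}{\gamma(x)}.
\]
By Cauchy--Schwarz, $\sum_x\alpha^2/\gamma\ge(\sum_x\alpha)^2/\sum_x\gamma=1$, with equality iff $\alpha\propto\gamma$, i.e.\ iff $\alpha=\gamma$. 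This quantity is $1+\chi^2(\alpha\|\gamma)$, so the derivative is strictly negative. Therefore, for small $t>0$,
\[
f(\rho_t)<f(\gamma)=D(\alpha\|\gamma),
\]
and so $Q>0$.

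The main obstacle is only making sure the derivative computation is legitimate. Because $\gamma$ and $\alpha$ have full support, $\rho_t$ stays in the interior of the simplex and both KL terms are smooth in $t$ near $0$, so no boundary issues arise. An alternative route to Step 2 would use the explicit formula \eqref{eq:Wext_1}: show $\langle W(c\alpha/\gamma)\rangle_\alpha>1$ via the normalization $\sum_x\alpha/W(c\alpha/\gamma)=1$ and Jensen applied to $u\mapsto 1/u$, since Jensen gives $\langle W\rangle_\alpha\ge 1/\langle 1/W\rangle_\alpha=1$, strictly unless $W(c\alpha/\gamma)$ is constant, i.e.\ unless $\alpha\propto\gamma$. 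The perturbation argument above is simpler and is the one I would write.
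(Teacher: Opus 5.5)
Your proof is correct. Step 1 matches the paper up to the choice of trial point: the paper plugs in $\rho=\alpha$ and you plug in $\rho=\gamma$, and both give $K=D(\alpha\|\gamma)$ with $K(\rho):=D(\alpha\|\rho)+D(\rho\|\gamma)$.

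For strictness you take a different, local route along the same segment between $\alpha$ and $\gamma$:
\begin{itemize}
\item The paper notes that $K(\alpha)=K(\gamma)=D(\alpha\|\gamma)$ and invokes strict convexity of $K$. Hence every interior point $\mu\gamma+(1-\mu)\alpha$ satisfies $K<D(\alpha\|\gamma)$.
\item You compute the slope at the endpoint $\gamma$, namely $\left.\frac{d}{dt}K(\rho_t)\right|_{t=0}=1-\sum_x\alpha(x)^2/\gamma(x)=-\chi^2(\alpha\|\gamma)$. Cauchy--Schwarz then makes it strictly negative when $\alpha\neq\gamma$.
\end{itemize}

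What each buys:
\begin{itemize}
\item The paper's argument is shorter and reuses the strict convexity already asserted in the main text for uniqueness of the minimizer. It also shows the improvement holds at every interior point of the segment.
\item Yours needs no convexity, only smoothness of $K$ near $\gamma$, which holds because $\rho_t$ stays in the interior of the simplex. It is quantitative: the initial rate at which moving away from $\gamma$ lowers the dissipation is exactly $\chi^2(\alpha\|\gamma)$. The cost is that you only get improvement for small $t$, but that is all the lemma needs.
\end{itemize}

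Your Jensen alternative, $\langle W\rangle_\alpha\ge 1/\langle 1/W\rangle_\alpha=1$, is also valid. However, it relies on Proposition~\ref{prop:1} and the closed form \eqref{eq:Wext_1}, so it is less self-contained than either of the other two arguments.
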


\begin{proof}
Let $\rho^{*}$ be a candidate intermediary, not necessarily optimal. Define $K(\rho^{*}) := D(\alpha || \rho^{*}) + D(\rho^{*} || \gamma)$. Then, 
\begin{equation}
\label{eq:K_rhostar_ineq}
    K(\rho^{*}) \ge \min_{\rho} \{D(\alpha || \rho) + D(\rho || \gamma)\}=D(\alpha||\gamma)-Q(\alpha,\gamma).
\end{equation}
Setting $\rho^*=\alpha$,~\eqref{eq:K_rhostar_ineq} becomes 
 \begin{equation}
\begin{aligned}
&K(\alpha) \ge D(\alpha||\gamma)-Q(\alpha,\gamma)\\
&\Rightarrow  Q(\alpha,\gamma)\ge 0,
 \end{aligned}
 \end{equation}
having used $K(\alpha)=D(\alpha||\gamma)$. If $\alpha=\gamma$, it follows from the definition~\eqref{eq:Q_def} and non-negativity of the KL divergence that $Q(\alpha,\gamma)=0$. It remains to show that $Q(\alpha,\gamma)>0$ whenever $\alpha\neq \gamma$. This is implied by showing the existence of a $\rho^*$ such that $K(\rho^*)<D(\alpha||\gamma)$. In particular, consider the linear interpolation, $\rho^{(\mu)}=\mu\gamma+(1-\mu)\alpha$, for $\mu\in[0,1]$, with $\rho^{(0)}=\alpha$, $\rho^{(1)}=\gamma$~\cite{Ay_2019}. For any $\mu\in(0,1)$, by strict convexity, 
     \begin{equation}
K(\rho_\mu) < (1-\mu)K(\alpha)+\mu K(\gamma)=D(\alpha||\gamma),
     \end{equation}
and thus there exists some minimizer $\tilde{\rho}$ with $K(\tilde{\rho})\le K(\rho_\mu)<D(\alpha||\gamma)$, for which $Q(\alpha,\gamma)>0$, unless $\alpha=\gamma$ in which case $\rho_\mu=\alpha=\gamma$.
 \end{proof}

\section{Optimal intermediaries in continuum sample spaces}
\label{ap:cont}
Proposition~\ref{prop:1} and Ref.~\cite{Pavlichin_2016} characterize the optimal intermediaries for finite sample spaces $X$; here, we consider its extension to continuous probability densities. We take $X$ as continuous and show that the optimal intermediary distributions satisfy the continuum version of~\eqref{eq:lambertsol}; for simplicity, we consider $X=\mathbb{R}$, the choice relevant to the optical trap system considered in Sec.~\ref{ssec:optical_trap}.

For a fixed choice of $\rho_0(x)$ and $\rho_{N+1}(x)$, the former assumed to be absolutely continuous with respect to the latter, the objective is to minimize the sum of continuous KL divergences between intermediaries:
\begin{equation}
    \sum_{i=1}^{N+1} D(\rho_{i-1} \| \rho_i) = \sum_{i=1}^{N+1} \int_{\mathbb{R}} \rho_{i-1}(x) \log \frac{\rho_{i-1}(x)}{\rho_i(x)}\, dx.
\end{equation}
The minimization is over all continuous probability densities $\rho_1, \ldots, \rho_N$ with sample space $\mathbb{R}$, subject to normalization $\int_{\mathbb{R}} \rho_i(x)\, dx = 1$ for each $i \in \{1, \ldots, N\}$ and absolute continuity of each $\rho_i$ with respect to $\rho_{i+1}$. We form the Lagrangian
\begin{equation}\begin{aligned}
    \mathcal{L}(\{\rho_i\}_{i=1}^N) &= \sum_{i=1}^{N+1} \int_{\mathbb{R}} \rho_{i-1}(x)\log\frac{\rho_{i-1}(x)}{\rho_i(x)}\, dx \\
    &\quad  \quad \quad - \sum_{i=1}^N \lambda_i\left(\int_{\mathbb{R}} \rho_i(x)\, dx - 1\right).
\end{aligned}\end{equation}
The terms in $\mathcal{L}$ involving $\rho_i$ are $-\int_{\mathbb{R}} \rho_{i-1}(x)\log \rho_i(x)\,dx$ from the $i$-th KL term and $\int_{\mathbb{R}} \rho_i(x)\log(\rho_i(x)/\rho_{i+1}(x))\,dx$ from the $(i+1)$-th KL term. For arbitrary function $\eta:\mathbb{R}\to \mathbb{R}$ such that $\int_{\mathbb{R}} \eta(x)dx=0$, taking the functional derivative with respect to $\rho_i$ and setting it to zero gives, for almost every $x \in \mathbb{R}$ (ignoring elements of Lebesgue measure zero), 
\begin{equation}
\begin{aligned}
\label{eq:stationarity}
    \frac{\delta\mathcal{L}}{\delta \rho} &= \frac{d}{dh}\bigg[-\int_{\mathbb{R}} \rho_{i-1}(x)\log (\rho_i(x)+h\eta(x))dx \\ &+ \int_{\mathbb{R}} (\rho_i(x)+h\eta(x))\log\left(\frac{\rho_i(x)+h\eta(x)}{\rho_{i+1}(x)}\right)dx \\&+ \lambda_i\left(\int_{\mathbb{R}} (\rho_i(x)+h\eta(x))\, dx - 1\right)\bigg]\bigg|_{h=0}\\
    = \int_{\mathbb{R}} &\eta(x)\left(1 + \lambda_i - \frac{\rho_{i-1}(x)}{\rho_i(x)} + \log\frac{\rho_i(x)}{\rho_{i+1}(x)}\right)dx=0,
\end{aligned}
\end{equation}
from which, using that $\eta(x)$ is an arbitrary variation,
\begin{equation}
\label{eq:opt_int_reals}
        1 + \lambda_i - \frac{\rho_{i-1}(x)}{\rho_i(x)} + \log\frac{\rho_i(x)}{\rho_{i+1}(x)} = 0,
\end{equation}
which is formally identical to the condition arising in the finite sample space case. Setting $u(x) := \rho_{i-1}(x)/\rho_i(x)$ and rearranging~\eqref{eq:stationarity},
\begin{equation}
    u(x) - \log u(x) = 1 + \lambda_i + \log\frac{\rho_{i-1}(x)}{\rho_{i+1}(x)}.
\end{equation}
Exponentiating both sides, multiplying through by $e^{u(x)}$, writing $c_i = e^{1+\lambda_i} > 0$, and applying the definition of the Lambert $W$ function, we arrive at
\begin{equation}
\label{eq:lambertsol_continuous}
    \rho_i(x) = \frac{\rho_{i-1}(x)}{W\!\left(\dfrac{c_i\, \rho_{i-1}(x)}{\rho_{i+1}(x)}\right)},
\end{equation}
 with constants $c_i > 0$ determined by normalization. The sum of KL divergences is strictly convex within the set of sequences of densities $\{\rho_{i}\}_{i=1}^{N}$ for which each $\rho_i$ is absolutely continuous with respect to $\rho_{i+1}$, for $i=0,...,N$, at fixed $\rho_0$ and $\rho_{N+1}$. Hence,~\eqref{eq:lambertsol_continuous} specifies the unique global minimizer; the form of~\eqref{eq:lambertsol_continuous} is formally identical to the finite sample space case. By multiplying~\eqref{eq:opt_int_reals} by $\rho_i(x)$ and integrating over $\mathbb{R}$, one obtains the relation $\lambda_i=-D(\rho_{i}||\rho_{i+1})$.

\section{Stepwise dissipation at large separation}
\label{ap:klproof}

Here we consider the limiting behavior of dissipation along each intermediate step in the 1D optical trap system considered in Sec.~\ref{ssec:optical_trap}. In particular, we characterize the two contributions $D(\rho_0||\rho_1)$ and $D(\rho_1||\pi^{H_0})$ for the case of $N=1$ optimal intermediary, as a function of the separation $x_0$ between the unimodal peaks of $\rho_0$ and $\pi^{H_0}$. Fig.~\ref{fig:klsplit} displays the result: as the separation $x_0$ increases, the KL divergence between $\rho_1$ and $\pi^{H_0}$ approaches $1$, whereas that between $\rho_0$ and $\rho_1$ (and hence the total) grows indefinitely. Accordingly, the initial step becomes the dominant contribution to dissipation. We prove this result for a general class of potentials below. In the bottom panel of Fig.~\ref{fig:klsplit}, we present numerical evidence for a similar phenomenon taking place at $N>1$; the dominant contribution still comes from the first step, with contributions from the remaining steps approaching constant values.

\begin{figure}
    \centering
    \includegraphics[width=1\linewidth]{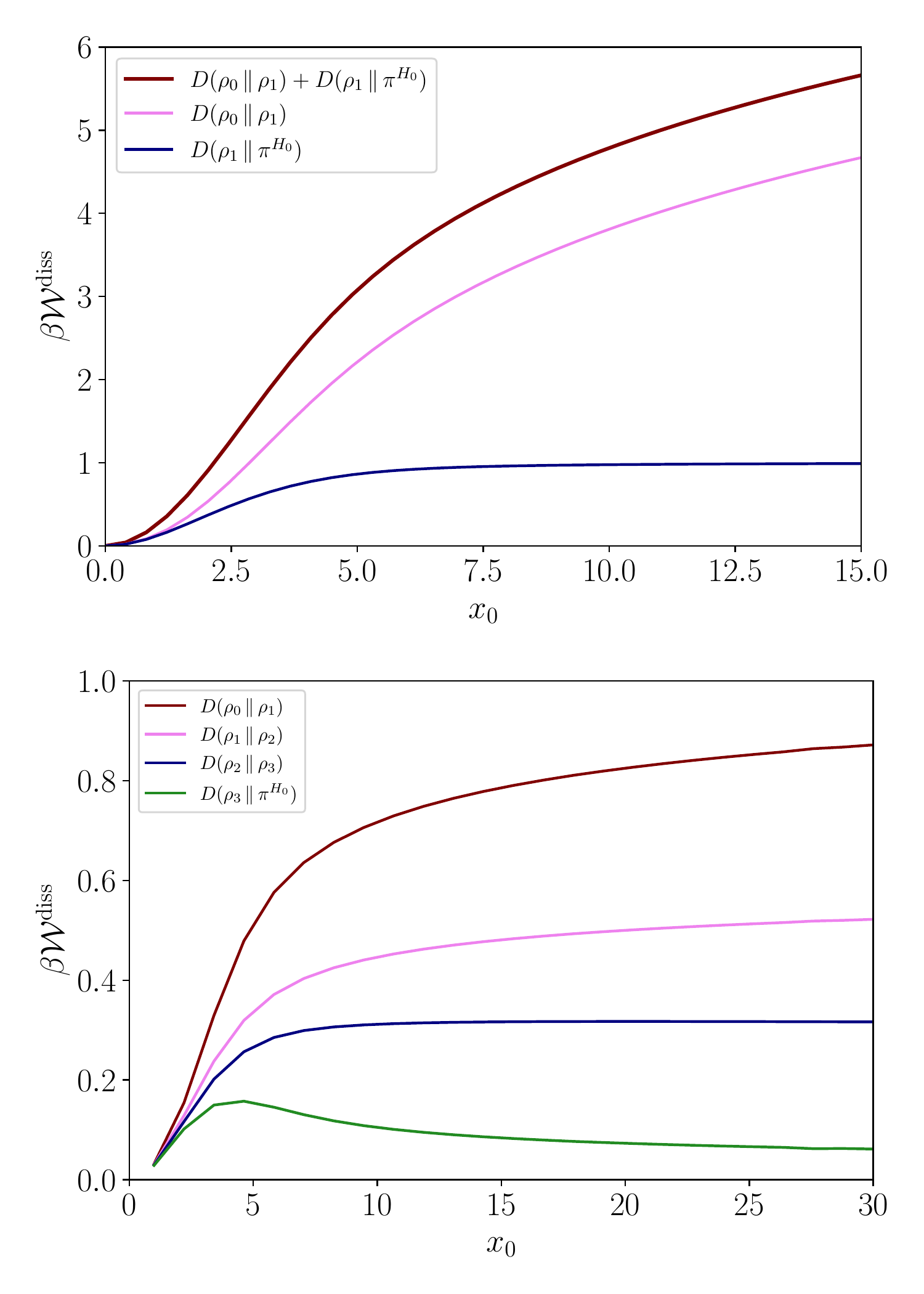}
    \caption{Top: KL divergences ($\beta \mathcal{W}^\mathrm{diss}$) between $\rho_0$, $\rho_1$ and $\pi^{H_0}$ at $\beta=1$ as a function of $x_0$ given $\rho_0(x) = e^{-x^2/2}/\sqrt{2\pi}$ and $\pi^{H_0}(x)=e^{-(x-x_0)^2/2}/\sqrt{2\pi}$. As stated in Lemma~\ref{lem:KL1_appendix}, $D(\rho_1||\pi^{H_0})\rightarrow 1$ as $x_0\rightarrow\infty$. Bottom: KL divergences between $\rho_0$, $\{\rho_i\}_{i=1}^3$ and $\pi^{H_0}$ at $\beta=1$ as a function of $x_0$. We see numerical evidence that at large $x_0$ the dominant contribution still comes from the first step, with the remaining contributions approaching constants.}
    \label{fig:klsplit}
\end{figure}

Consider distributions $\alpha(x)=e^{-\beta H_a(x)}/Z_\alpha$ and $\gamma(x) =e^{-\beta H_b(x; L)}/Z_\gamma$ parameterized by a separation parameter $L$. Define the intermediary distribution
\begin{equation}
\rho(x) = \frac{\alpha(x)}{W\left(\frac{\alpha(x)c}{\gamma(x)}\right)},
\end{equation}
where $W$ is the Lambert $W$ function and $c$ is chosen such that $\int \rho(x) dx = 1$. 

\begin{lem}
\label{lem:KL1_appendix}
Suppose there exist disjoint regions $R_1, R_2$ that partition that domain such that as $L \rightarrow \infty$:
\begin{enumerate}
\item $H_\gamma(x; L) - H_\alpha(x) \rightarrow +\infty$ for $x \in R_1$,
\item $H_\gamma(x; L) - H_\alpha(x) \rightarrow -\infty$ for $x \in R_2$,
\item $\int_{R_2} \gamma(x; L) dx \rightarrow 1$.
\item The ratio of partition functions $Z_\gamma/Z_\alpha$ and $c(L)$ are bounded above and below by positive constants. 
\end{enumerate}
Then $\mathrm{D}(\rho \| \gamma) \rightarrow 1$ as $L \rightarrow \infty$. 
\end{lem}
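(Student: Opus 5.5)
The plan is to first derive an exact identity for $D(\rho\|\gamma)$ in terms of $c$, which turns the lemma into the statement $c(L)\to 1$. Write $z(x):=c\,\alpha(x)/\gamma(x)=c\,(Z_\gamma/Z_\alpha)\,e^{\beta(H_\gamma(x;L)-H_\alpha(x))}$. Using $W(z)e^{W(z)}=z$ we get
\begin{equation}
\frac{\rho(x)}{\gamma(x)}=\frac{z(x)}{c\,W(z(x))}=\frac{e^{W(z(x))}}{c},
\end{equation}
so that $\log(\rho/\gamma)=W(z)-\log c$. Since $\rho\,W(z)=\alpha$ pointwise,
\begin{equation}
D(\rho\|\gamma)=\int\rho\,W(z)\,dx-\log c=\int\alpha\,dx-\log c=1-\log c .
\end{equation}
This is the continuum version of the relation $c=\exp(1-D(\rho\|\gamma))$ noted after Proposition~\ref{prop:1}. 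Hence it suffices to show that $c(L)\to1$.

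Next I would split the normalization condition $1=\int\rho\,dx$ over the partition $R_1\cup R_2$. By hypotheses 1, 2 and 4, $z\to\infty$ pointwise on $R_1$ and $z\to0$ pointwise on $R_2$.

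On $R_1$ we have $\int_{R_1}\rho=\int_{R_1}\alpha/W(z)$. Here $W(z)\to\infty$, so the integrand tends to zero. Since $\alpha$ is fixed, I would use dominated convergence with dominating function $\alpha\max(1,1/W(z_{\min}))$, after first restricting to the set where $z\ge 1$, say. The complement of that set inside $R_1$ has vanishing $\alpha$-mass, which again follows from the pointwise divergence.

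On $R_2$ I would write $\int_{R_2}\rho=c^{-1}\int_{R_2}\gamma\,e^{W(z)}$. I would then use the elementary bounds $1\le e^{W}\le 1+We^{W}=1+z$; the upper bound follows from $e^{w}(1-w)\le1$. These give
\begin{equation}
\int_{R_2}\gamma\le c\int_{R_2}\rho\le\int_{R_2}\gamma+c\int_{R_2}\alpha .
\end{equation}
Combined with hypothesis 3, this yields $c\int_{R_2}\rho\to1$ provided $c\int_{R_2}\alpha\to0$. Putting the two regions together gives $1=\int\rho=c^{-1}(1+o(1))+o(1)$, hence $c\to1$ and $D(\rho\|\gamma)\to1$. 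Hypothesis 4 (boundedness of $c$ and $Z_\gamma/Z_\alpha$) is used throughout to keep the $c$-dependent error terms $o(1)$ and to make $z\to0,\infty$ follow from the energy differences alone.

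The main obstacle is the interchange of limits and integrals, because the regions and $\gamma$ depend on $L$ and the hypotheses are only pointwise. The delicate term is $c\int_{R_2}\alpha$. The bound above shows it is exactly the correction that would spoil $c\to1$, so I need $\int_{R_2}\alpha\to0$, i.e.\ the initial density must put vanishing mass where the final one concentrates. I would first try to extract this from hypotheses 2–3 together with the boundedness of $Z_\gamma/Z_\alpha$. If that fails, I would make it explicit as the natural reading of the partition, namely $\int_{R_1}\alpha\to1$. This clearly holds in the Gaussian optical-trap case, with $R_1=\{x<x_0/2\}$ and $R_2=\{x\ge x_0/2\}$, where both tail masses decay like $e^{-\beta x_0^2/8}$. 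In that case all the dominated-convergence steps can also be checked directly.
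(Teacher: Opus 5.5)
Your argument follows the paper's proof. Both use the identity $D(\rho\|\gamma)=1-\log c$ obtained from $W(z)e^{W(z)}=z$, split the normalization integral over $R_1\cup R_2$, apply dominated convergence on $R_1$, and use $W(z)\approx z$ on $R_2$. Your version is tighter at the two delicate points. On $R_2$ the paper writes $\alpha/W\sim\gamma/c$ and hides the error in an $o(1)$; your bounds $\gamma\le c\rho\le\gamma+c\alpha$ show this error is at most $c\int_{R_2}\alpha$, a term the paper never addresses. On $R_1$ the paper's bound $\alpha/W\le\alpha$ holds only eventually for each fixed $x$, which does not justify dominated convergence. Your restriction to $\{z\ge 1\}$ supplies a genuine $L$-independent dominating function.

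Two small points need repair.

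First, on $R_1\cap\{z<1\}$, vanishing $\alpha$-mass alone does not control $\int\rho$, because there $\rho=\alpha/W(z)$ can be much larger than $\alpha$. Use your own inequality $c\rho\le\gamma+c\alpha$ together with $\int_{R_1}\gamma\to0$, which is hypothesis 3.

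Second, and more importantly, the fact you were unsure about, $\int_{R_2}\alpha\to0$, does follow from the hypotheses, so no extra assumption is needed. Take $R_1,R_2$ fixed, which is the reading under which the pointwise limits in hypotheses 1--2 make sense, and write $\Delta=H_\gamma-H_\alpha$. Hypotheses 2 and 4 give $\gamma/\alpha=(Z_\alpha/Z_\gamma)e^{-\beta\Delta}\to\infty$ pointwise on $R_2$. Fatou's lemma applied to $\int_{R_2}(\gamma/\alpha)\,\alpha\,dx\le1$ then forces $\alpha=0$ a.e.\ on $R_2$, so your correction term vanishes identically.

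One caveat about the lemma itself: for an everywhere-positive Gibbs $\alpha$, this makes $R_2$ Lebesgue-null, which is incompatible with hypothesis 3. So the lemma is really meant for $L$-dependent regions, as in your Gaussian example, with hypotheses 1--2 read uniformly. In that reading you get directly
$\int_{R_2}\alpha=(Z_\gamma/Z_\alpha)\int_{R_2}\gamma\,e^{\beta\Delta}\le(Z_\gamma/Z_\alpha)\sup_{R_2}e^{\beta\Delta}\to0$,
and likewise $\int_{R_1}\rho\le 1/\inf_{R_1}W(z)\to0$.
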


\begin{proof}
The KL divergence can be written as
\begin{equation}
\mathrm{D}(\rho \| \gamma) = \int_{\mathbb{R}} \rho(x) \log \frac{\rho(x)}{\gamma(x)} dx.
\end{equation}
Substituting the definition of $\rho$ and using the Lambert $W$ property $z = W(z)e^{W(z)}$, we have $\frac{\alpha c}{\gamma} = W e^W$, which gives $\log\frac{\alpha}{\gamma} = \log W + W - \log c$. Therefore,
\begin{align}
\mathrm{D}(\rho \| \gamma) &= \int_{\mathbb{R}} \frac{\alpha(x)}{W(x)} \left[\log \frac{\alpha(x)}{\gamma(x)} - \log W(x)\right] dx \\
&= \int_{\mathbb{R}} \frac{\alpha(x)}{W(x)} [W(x) - \log c] dx \\
&= 1 - \log c,
\end{align}
where we used $\int_{\mathbb{R}} \alpha(x) dx = 1$ and $\int_{\mathbb{R}} \rho(x) dx = 1$. It remains to show that $c:=c(L) \rightarrow 1$. The normalization condition is
\begin{equation}
\int \frac{\alpha(x)}{W(\alpha(x)c/\gamma(x))} dx = 1.
\end{equation}
Let $\Delta(x; L) = H_\gamma(x; L) - H_\alpha(x)$. Then the condition becomes
\begin{equation}
    \int \frac{\alpha(x)}{W((Z_\gamma/Z_\alpha) ce^{\beta\Delta(x;L)})} dx = 1.
\end{equation}
Consider the limit of $L\rightarrow \infty$. Note that $R_1,R_2$ partition the domain (up to a measure zero set), so we can split this integral into its contributions over disjoint regions $R_1,R_2$.
\begin{equation}
    \int_{R_1} \frac{\alpha(x)}{W\left(\frac{Z_\gamma}{Z_\alpha} ce^{\beta\Delta(x;L)}\right)} dx + \int_{R_2} \frac{\alpha(x)}{W\left(\frac{Z_\gamma}{Z_\alpha} ce^{\beta\Delta(x;L)}\right)} dx= 1.
\end{equation}
In region $R_1$ where $\Delta\to\infty$, assumption (4) implies
\begin{equation}
W\!\left(
\frac{Z_\gamma}{Z_\alpha}
ce^{\beta\Delta}
\right)\rightarrow\infty.
\end{equation}
Hence
\begin{equation}
\frac{\alpha(x)}
{W((Z_\gamma/Z_\alpha)e^{\beta\Delta}c)}
\rightarrow 0,
\end{equation}
and in particular,
\begin{equation}
    0 \leq \frac{\alpha(x)}{W\left(\frac{Z_\gamma}{Z_\alpha} ce^{\beta\Delta(x;L)}\right)}\leq \alpha(x),
\end{equation}
on $R_1$ as $L\rightarrow \infty$. Then, since $\alpha(x)$ is integrable, it follows by dominated convergence that
\begin{equation}
    \int_{R_1} \frac{\alpha(x)}{W\left(\frac{Z_\gamma}{Z_\alpha} ce^{\beta\Delta(x;L)}\right)} dx \rightarrow 0.
\end{equation}

In region $R_2$ where $\Delta\rightarrow-\infty$, we have
\begin{equation}
    W\left(
\frac{Z_\gamma}{Z_\alpha}
ce^{\beta\Delta}
\right)
\sim
\left(\frac{Z_\gamma}{Z_\alpha}\right)
ce^{\beta\Delta},
\end{equation}
so that
\begin{equation}
\frac{\alpha(x)}
{W\left(\frac{Z_\gamma}{Z_\alpha} ce^{\beta\Delta(x;L)}\right)} \sim \frac{\alpha(x)}
{c(\alpha(x)/\gamma(x))}
=
\frac{\gamma(x)}{c},
\end{equation}
where we substituted $c\alpha(x)/\gamma(x)$ for $(Z_\gamma/Z_\alpha)ce^{\beta\Delta(x;L)}$.
Thus the normalization condition becomes
\begin{equation}
\frac{1}{c} \int_{R_2} \gamma(x) dx + o(1) = 1.
\end{equation}
By assumption (3), $\int_{R_2} \gamma(x) dx \rightarrow 1$, hence $c \rightarrow 1$ and $\mathrm{D}(\rho \| \gamma) \rightarrow 1$.
\end{proof}

We note that this result applies to a broad class of potentials including single-well, double-well, and mixture systems with growing separation.

\section{Relaxation time in enzymatic reaction}
\label{ap:relaxation_time}

Here we derive an estimate of the equilibration timescale $\mathcal{T}^{\mathrm{equi}}$ invoked in Sec.~\ref{ssec:enzyme}. Namely, we approximate $\mathcal{T}^{\mathrm{equi}}$ by the relaxation time $\tau$ of a continuous-time Markov chain rate with probability rate matrix determined by microstate energies. Let $\epsilon_i$ be the energy of microstate $i$, for $i \in \{B, I, U\}$. Let $k_{ij}$ denote the rate of transition to microstate $i$ from microstate $j$. Assuming Arrhenius rates~\cite{zwanzig2001nonequilibrium}, we have

\begin{equation}
\begin{aligned}
    k_{IB} &= \nu e^{-\beta(\epsilon_I - \epsilon_B)}, \quad k_{BI} = \nu, \\
    k_{IU} &= \nu e^{-\beta(\epsilon_I - \epsilon_U)}, \quad k_{UI} = \nu,\\
    k_{UB} &= 0 , \quad\quad\quad \quad \quad \ k_{BU}=0,
\end{aligned}
\end{equation}

where $\nu$ is a constant (possibly depending on $\beta$ \cite{arcus2020}) affecting the overall rate. Note that $k_{BU}=k_{UB}=0$ follows from assigning zero kinetic activity to transitions between microstates $B$ and $U$, rather than from their energies. Let $p_i(t)$ denote the probability that the microstate is $i$ at time $t$. The corresponding master equation is:
\begin{equation}
\label{eq_master}
    \frac{d}{dt} \begin{bmatrix}
        p_B \\
        p_I \\
        p_U
    \end{bmatrix} = 
    \begin{bmatrix}
        -k_{IB} & k_{BI} & 0 \\
        k_{IB} & -(k_{BI}+k_{UI}) & k_{IU} \\
        0 & k_{UI} & -k_{IU} 
    \end{bmatrix}
    \begin{bmatrix}
        p_B \\
        p_I \\
        p_U
    \end{bmatrix}.
\end{equation}

The mean residence time in each microstate is the inverse of its total exit rate: microstate $I$ has mean residence time $1/(k_{BI}+k_{UI}) = 1/2\nu$, while microstate $B$ and $U$ have mean residence time $1/k_{IB} = e^{\beta (\epsilon_I-\epsilon_B)}/\nu$ and $1/k_{IU} = e^{\beta (\epsilon_I-\epsilon_U)}/\nu$, respectively. When the intermediate microstate $I$ has high enough energy, i.e., $\epsilon_I \gg \epsilon_B, \epsilon_U$, the total exit rate of microstate $I$ is much faster compared to exit rates of $B$ or $U$. This introduces a separation of timescales: $p_B$ and $p_U$ change at the slow timescale $\sim 1/k_{IB}$ (or $1/k_{IU}$), and $p_I$ changes at faster timescale of $\sim 1/2\nu$. In this way, on the slow timescale, $p_I$ is determined by the instantaneous values of $p_B$ and $p_U$, and we may set $\dot{p}_I = 0$. The instantaneous probability of microstate $I$, denoted under this approximation by $p^*_I(t)$, is then given by

\begin{equation}
\begin{aligned}
    0 &= k_{IB}p_B(t) -(k_{BI}+k_{UI})p^*_I + k_{IU}p_U(t)  \\
&\Rightarrow  p^*_I(t) = \frac{k_{IU}p_U(t)+k_{IB}p_B(t)}{2\nu}.
\end{aligned}
\end{equation}

Substituting this into Eq.~(\ref{eq_master}), we obtain an effective master equation: 

\begin{equation}
    \frac{d}{dt} \begin{bmatrix}
        p_B\\p_U
    \end{bmatrix} =  
    \begin{bmatrix}
    -k_{IB}/2 & k_{IU}/2 \\
    k_{IB}/2 & -k_{IU}/2
    \end{bmatrix} \begin{bmatrix}
        p_B\\p_U
    \end{bmatrix}.
\end{equation}
The the smallest nonzero eigenvalue of this effective rate matrix is $\alpha =  -(k_{IB} + k_{IU})/2$, corresponding to the slowest-decaying nonstationary eigenvector. The `relaxation time' is defined as the reciprocal of $|\alpha|$ \cite{levin2017markov}:
\begin{align}
    \tau &= \frac{1}{|\alpha|}= \frac{2}{k_{IB}+k_{IU}} = \frac{2}{\nu}\frac{e^{\beta\epsilon_I}}{e^{\beta\epsilon_B} + e^{\beta\epsilon_U}}.
\end{align}
One can further simplify $\tau$ by considering either 
\begin{itemize}
    \item[(a)] $\epsilon_B\gg\epsilon_U$, which results in
    $\tau = (2/\nu)e^{\beta(\epsilon_I - \epsilon_B)}$, or 
    \item[(b)] $|\epsilon_B-\epsilon_U|\ll 1$, which results in $\tau = (1/\nu)e^{\beta(\epsilon_I - \epsilon_B)}$.
\end{itemize}
In both cases, 
\begin{equation}
\label{eq:tau}
    \tau = \mathrm{const}\times e^{\beta(\epsilon_I - \epsilon_B)},
\end{equation}
with the constant having a possible $\beta$-dependence through $\nu$. The form~\eqref{eq:tau} can also be obtained as a limiting regime of the equilibration time associated with the full $3\times 3$ system:
\begin{equation}
\tau=\frac{2 e^{\epsilon_I \beta}/\nu}{e^{\epsilon_B \beta} + 2 e^{\epsilon_I \beta} + e^{
   \epsilon_U \beta}- \varrho},
\end{equation}
where
\begin{equation}
    \varrho=\sqrt{
   e^{2 \beta\epsilon_B } + 4 e^{2 \beta\epsilon_I } + e^{2 \beta\epsilon_U} - 
    2 e^{\beta(\epsilon_B + \epsilon_U)}}.
\end{equation}
In the enzyme model of Sec.~\ref{ssec:enzyme}, under the effective quench, $\epsilon_I$ is replaced by $\epsilon_I'=\epsilon_I-\delta$. Then~\eqref{eq:tau} becomes equivalent to~\eqref{eq:Tequi}. With $\epsilon_B=1$, $\epsilon_U=0$ and $\nu=1$, the full expression for $\mathcal{T}^{\mathrm{equi}}$ reads
\begin{equation}
\mathcal{T}^{\mathrm{equi}}=\frac{2 e^{\beta\epsilon_I'}}{1+e^{\beta} + 2 e^{\beta\epsilon_I'}- \sqrt{1+
   e^{2 \beta}- 
    2 e^{\beta} + 4 e^{2 \beta\epsilon_I'}  }}.
\end{equation}

\section{Estimation of $\lambda$ in enzyme model}
\label{ap:estimation}

Here we illustrate that the parameter $\lambda$ appearing in the cost function $C(\delta)$ of Sec.~\ref{sssec:cost} can be estimated from data, under the assumption that such data arose via the minimization of $C(\delta)$. At a minimum,
\begin{equation}
\label{eq:Cmax}
\frac{dC(\delta)}{d\delta}=\frac{d\mathcal{W}^{\mathrm{diss}}(\delta)}{d\delta}+\lambda\frac{d\mathcal{T}^{\mathrm{equi}}(\delta)}{d\delta}=0.
\end{equation}
Denoting the solution of~\eqref{eq:Cmax} as $\delta=\delta^*$, the parameter $\lambda$ can be obtained by simple rearrangement:
\begin{equation}
\label{eq:lam_est}
\lambda=-\left.\frac{d\mathcal{W}^{\mathrm{diss}}(\delta)}{d\delta}\left/\frac{d\mathcal{T}^{\mathrm{equi}}(\delta)}{d\delta}\right.\right\vert_{\delta=\delta^*}
\end{equation}
In the enzyme example of Sec.~\ref{ssec:enzyme}, the right-hand side of~\eqref{eq:lam_est} before setting $\delta=\delta^*$ is a function of energies $(\epsilon_B,\epsilon_I,\epsilon_U)$, inverse temperature $\beta$ and the value of function $M(\beta)$, as well as the change in barrier height $\delta$. We may observe an experimentally measured value $\delta=\hat\delta$, and make the assumption of optimality, i.e., take $\delta^*=\hat{\delta}$. Then, given estimates of the other parameters, one may compute~\eqref{eq:lam_est} directly. In the enzyme model specifically, at $\epsilon_B=1$ and $\epsilon_U=0$, we have
\begin{equation}
\frac{d\mathcal{W}^{\mathrm{diss}}(\delta)}{d\delta}=\frac{d}{d\delta}\left(\frac{\delta e^{-\beta(\epsilon_I-\delta)}}{Z'(\delta)}\right)=\pi^{H'}_I[1+\delta\beta(1-\pi^{H'}_I)]
\end{equation}
and
\begin{equation}
\frac{d\mathcal{T}^{\mathrm{equi}}(\delta)}{d\delta}=\frac{d}{d\delta}\left(M(\beta)e^{\beta-\delta-\epsilon_B}\right)=-\beta \mathcal{T}^{\mathrm{equi}}(\delta),
\end{equation}
from which, given estimates of $\beta,\epsilon_I$, and either $\mathcal{T}^{\mathrm{equi}}$ or $M(\beta)$, and assuming optimality of the measured value $\delta=\hat\delta$,~\eqref{eq:lam_est} provides an estimate $\hat\lambda$ of $\lambda$:
\begin{equation}
\hat\lambda=\frac{\pi^{H'}_I(\hat\delta)[1+\hat\delta\beta(1-\pi^{H'}_I(\hat\delta))]}{\beta \mathcal{T}^{\mathrm{equi}}(\hat\delta)},
\end{equation}
where $\pi^{H'}_I(\hat\delta):=e^{-\beta (\epsilon_I-\hat\delta)}/Z'(\hat\delta)$.

\end{document}